\title{Spatial Latent Gaussian Modelling with Change of Support}
\author{
  E.A. Chacón-Montalván\textsuperscript{1},
  P.M. Atkinson\textsuperscript{2},
  C. Nemeth\textsuperscript{3},
  B.M. Taylor\textsuperscript{4},
  P. Moraga\textsuperscript{1}
}
\institute{
  \textsuperscript{1}\textit{Computer, Electrical and Mathematical Sciences and Engineering Division, King Abdullah University of Science and Technology (KAUST), Thuwal 23955-6900, Saudi Arabia}\\
  \textsuperscript{2}\textit{Lancaster Environment Centre, Lancaster University, United Kingdom}\\
  \textsuperscript{3}\textit{Department of Mathematics and Statistics, Lancaster University, United Kingdom}\\
  \textsuperscript{4}\textit{School of Mathematical Sciences, University College Cork, Ireland }\\
}
\date{March 13, 2024}
\newcommand{\ssp}{W} 
\newcommand{\rv}[1]{#1} 
\newcommand{\rve}[1]{\bm{#1}} 
\newcommand{\gp}{\text{GP}} 
\newcommand{\gmrf}{\text{GMRF}} 
\newcommand{\ve}[1]{\bm{#1}} 
\newcommand{\m}[1]{\bm{#1}} 
\newcommand{\abs}[1]{\left| #1 \right|} 
\newcommand{\tr}{\intercal}
\newcommand{\cross}[2][]{#2^\tr\ifthenelse{\isempty{#1}}{}{#1} #2}
\newcommand{\tcross}[2][]{#2\ifthenelse{\isempty{#1}}{}{#1} #2^\tr}
\newcommand{\ind}[1]{\mathds{1}_{\left(#1\right)}} 
\newcommand{\real}{\mathbb{R}} 
\newcommand{\mean}[2][]{\mathds{E}\ifthenelse{\isempty{#1}}{}{_{#1}}\left[#2\right]}
\newcommand{\meanhat}[2][]{\hat{\mathds{E}}\ifthenelse{\isempty{#1}}{}{_{#1}}\left[#2\right]}
\newcommand{\var}[2][]{\mathds{V}\ifthenelse{\isempty{#1}}{}{_{#1}}\left[#2\right]}
\newcommand{\pre}[2][]{\text{Prec}\ifthenelse{\isempty{#1}}{}{_{#1}}\left[#2\right]}
\newcommand{\cov}[3][]{\text{Cov}\ifthenelse{\isempty{#1}}{}{_{#1}}\left[#2,#3\right]}
\newcommand{\cor}[3][]{\text{Cor}\ifthenelse{\isempty{#1}}{}{_{#1}}\left[#2,#3\right]}
\newcommand{\pr}[1]{\text{Pr}\left(#1\right)} 
\newcommand{\prd}[1]{\pi\left(#1\right)} 
\newcommand{\df}[1]{\mathcal{#1}} 
\newcommand{\eq}{\text{Equation }} 
\newcommand{\se}{\text{Section }} 
\newcommand{\fig}{\text{Figure }} 
\newcommand{\locs}{\mathbf{s}} 
\newcommand{\locsset}{\mathcal{S}} 
\newcommand{\loc}{l} 
\newcommand{\locsubset}{L} 
\newcommand{\locset}{\mathcal{L}} 
\newcommand{\locc}{c} 
\newcommand{\loccset}{\mathcal{C}} 
\newtheorem{definition}{Definition}
\newtheorem{theorem}{Theorem}
\newtheorem{corollary}{Corollary}
\begin{document}
\maketitle


\begin{abstract}[Data integration, Gaussian processes, geo-additive
  models, land suitability, model-based geostatistics, spatial misalignment.]
  Spatial data are often derived from multiple
  sources (e.g. satellites, in-situ sensors, survey samples) with different supports, but
  associated with the same properties of a spatial phenomenon of interest.
  It is common for predictors to also be measured on different spatial supports than the variables making up the response. Although there is no standard way to work with spatial data with different
  supports, a prevalent approach used by practitioners has been to use
  downscaling or interpolation to project all the variables of analysis towards a common
  support, and then using standard spatial models on this common support. The main disadvantage with this approach
  is that simple interpolation can introduce biases and, more importantly, the uncertainty
  associated with the change of support is not taken into account in the parameter
  estimation of the main model of interest. In this article, we propose a Bayesian
  \textit{spatial latent Gaussian model} that can handle data with different rectilinear supports in the
  response variable and the predictors. Our approach allows us to handle changes
  of support more naturally according to the properties of the spatial stochastic process
  being used, and to take into account the uncertainty from the change of support in
  parameter estimation and prediction. We use spatial stochastic processes
  as linear combinations of basis functions where Gaussian
  Markov random fields define the weights. This process can be projected to different
  supports whilst maintaining the same parameters. Our hierarchical modelling approach
  can be described by the following steps: (i) define a latent model where response
  variables and predictors are considered as latent stochastic processes with continuous
  support, (ii) link the continuous-index set stochastic processes with its projection to
  the support of the observed data, (iii) link the projected process with the observed
  data. We show the applicability of our approach by simulation studies and modelling the
  land suitability of improved grassland in Rhondda Cynon Taf, a county borough in Wales.
\end{abstract}



\section{Introduction}
\label{sec:introduction}



Many research questions in environmental science and public health necessitate the
utilisation of heterogeneous spatial data encompassing different supports. The
characteristics of the support, commonly referred to as spatial sampling units, can vary
significantly across data sources, including variations in size, shape, spacing, and
extent \citep{dungan2002balanced}. In disease incidence and prevalence modelling, it is
common to obtain case data at multiple levels of aggregation, such as at the individual level,
census areas, counties, and districts, along with predictors observed at both individual
and aggregated levels, including satellite pixels \citep{wang2018generalized,
alegana2016advances, lee2022delivering}. In environmental modelling, there is also a
frequent need to combine information from data sources with different supports across both the response and predictor variables, with a particular emphasis on the use of satellite products and synthetic
data derived from climate models \citep{brown2022agent, pacifici2019resolving,
ma2020spatiotemporal}. Integrating different data sources to address a research question
offers the potential to enhance parameter estimation and improve prediction accuracy
\citep{wang2018generalized, pacifici2019resolving, law2018variationala,
leopold2006accounting}. However, to achieve reliable results, it is crucial to employ a
data fusion framework that appropriately acknowledges the support of the data and incorporates
reasonable assumptions to mitigate biases and accurately quantify uncertainty
\citep{gotway2002combininga, pacifici2019resolving}.


The process of
projecting observed data to a common support, known as \textit{change of support} (COS) in
spatial statistics, is essential when integrating heterogeneous data with different supports. This encompasses two key situations: the \textit{change
of support problem} (COSP), where the response variable is observed at different supports,
and \textit{spatial misalignment}, where the support of predictors and responses differ
\citep{gelfand2001change, gotway2002combininga, zhu2003hierarchical}. Both scenarios
require addressing the challenge of effectively utilising the information from different
sources and aligning the data appropriately. In this paper, we adopt the term
\textit{change of support} to encompass both situations, highlighting the importance of
expressing the projection of spatial data to a specific support within a unified framework
for data integration and handling spatial misalignment.


Numerous attempts have been made to address the change of support problem in order to integrate
multiple spatial data sources. The first set of approaches focuses on geostatistical
methods, including point-to-point, block-to-block, point-to-block, and block-to-point
Kriging \citep{kyriakidis2004geostatistical}. These methods involve computing the
covariance matrix using Monte Carlo integration and parameter estimation through variogram
fitting. When dealing with spatial misalignment, predictors are
projected to the response support using one of the types of Kriging and incorporated into
regression models \citep{young2009assessing}.

The second set of approaches utilises Gaussian processes, a widely applied tool in spatial
statistics, to tackle the change of support problem. Several strategies exist within this framework.
One common approach is similar to Kriging, where the covariance matrix is computed using
Monte Carlo integration \citep{gelfand2001change}. Inference under the Gaussian process
framework typically involves optimising the likelihood function or utilising Bayesian
inference. Another strategy involves incorporating spatial point-level auxiliary
variables, approximating the integral of the Gaussian process over a region by averaging
the spatial auxiliary variables within that region \citep{cowles2009reparameterized}. The
objective here is to obtain the posterior distribution of these auxiliary variables.
Additionally, a multivariate approach can be employed, treating all variables as responses
and defining a full covariance matrix between sources. This approach leverages separable
covariances to reduce computational costs and approximates the covariance structure on
aggregated supports using Monte Carlo integration \citep{finley2014bayesian}. Predictions
can be made for all responses or specific ones. An alternative spectral approach, proposed
by \citet{reich2014spectral}, utilises the spectral representation of a Gaussian process.
The model at aggregated support resembles a linear model with pseudo-predictors derived
from spectral analysis. Finally, a computationally efficient approach involves utilising
stochastic partial differential equations (SPDE) to approximate the aggregated Gaussian
process \citep{moraga2017geostatistical}. The representation of the aggregated GP using
SPDE has a similar structure as the continuous GP with a custom linear transformation.
With this approximation, the methods for spatial inference and prediction using SPDE can
be applied.

The third set assumes that data arise from processes that are piecewise
constant on a predefined grid. This allows for analytical simplification, where integrals
become linear combinations of areas. The projection of piecewise
processes to other supports is computed as weighted averages of the latent process values
at cells intersecting the area of interest. These approaches often utilise grids based on
the highest observed resolution or custom computational grids. The main objective is to
obtain the posterior distribution of the piecewise constant process
\citep{taylor2015bayesian,taylor2018continuous,bradley2016bayesian}.


Although various approaches have been developed to address the change of support problem using
Kriging, Gaussian processes, and piecewise processes, they are not without limitations.
Many of these approaches rely on approximating covariance matrices on aggregated supports
using Monte Carlo integration or by averaging spatial random variables. However, the
accuracy of these approximations depends on the number of points used and the sampling
algorithms employed, and the discussion of approximation errors is often neglected.
Moreover, these approximations are necessary even in simpler domains like regular and
rectilinear grids. Another limitation is the lack of a precise connection between the
continuous and aggregated spatial processes in some approaches. Lastly, most existing
approaches are tailored to specific types of models, which restricts their flexibility in
handling more complex modelling scenarios within a unified framework.


In this article, we propose a Bayesian \textit{latent Gaussian spatial model} that
addresses the challenge of handling data with different supports in the response variable
and predictors. Our approach offers a more natural and flexible way to handle the change of
support problem, taking into account the properties of the underlying spatial stochastic process.
We incorporate the uncertainty associated with the change of support into parameter
estimation and prediction. Our proposed model utilises a spatial stochastic process
expressed as a linear combination of basis functions, where the weights are determined by
Gaussian Markov random fields. Notably, this process allows for accurate projection onto
rectilinear supports while preserving the same set of parameters. The hierarchical nature
of our model involves the following steps: (i) defining a latent model using latent
stochastic processes with continuous support that are related to the response variables
and predictors, (ii) establishing the connection between the continuous-index set
stochastic processes and their projection onto the support of the observed data, and (iii)
linking the observed data as noisy realisations of the projected processes. We demonstrate
the practical application of our approach in modelling the land suitability of improved
grassland in Rhondda Cynon Taf, a county borough in Wales. Our prediction of land
suitability relies on various predictors, including elevation, growing degree days,
and soil moisture surplus. Notably, these
predictors are available at different resolutions than the land cover data (25$m$).


The paper is structured as follows. In Section \ref{sec:slgm}, we introduce spatial latent
Gaussian models as a comprehensive framework encompassing common models used
in spatial statistics. Section
\ref{sec:cos} builds upon this foundation and extends the discussion to the properties of
these processes when projected onto aggregated supports. We present our proposed Bayesian
latent Gaussian spatial model with change of support in this section. To demonstrate the
reliability and flexibility of our approach, we conduct simulation studies in Section
\ref{sec:simulation-study}. In Section \ref{sec:case-studies}, we model the land suitability
of improved grassland in Rhondda Cynon Taf. Finally, in Section \ref{sec:discussion}, we
discuss the contributions, advantages, disadvantages, limitations, and potential future
directions of our work.


\section{Spatial latent Gaussian models}%
\label{sec:slgm}

Spatial latent Gaussian models (SLGM) encompass
commonly used models for analysing spatial point-level and area-level data. They
can be tailored to specific classical generalised spatial models depending on the latent
stochastic process used to capture spatial variation. We define a
SLGM in Section \ref{sub:slgm-definition}
and discuss the limitations concerning their ability to handle
spatial data with different levels of support in Section \ref{sub:slgm-limitations}.

\subsection{Definition}%
\label{sub:slgm-definition}

A classical SLGM assumes that the conditional distribution
$\df{F}(\cdot)$ of a random function $\rv{Y}(\loc)$, for a location $\loc \in D \subset R^2$,
given a $p$-dimensional set of covariates $\ve{x}(\loc) = [x_1(\loc), \dots ,
x_p(\loc)]^\intercal$ and the Gaussian random function $\ssp(\loc)$, can be defined as
follows
\begin{align}
\label{eq:slgm}
\rv{Y}(\loc) \mid \ve{x}(\loc), \ssp(\loc) & \sim \df{F}(\theta(\loc), \tau^2), \nonumber \\
    g(\theta(\loc)) & = \eta(\loc) =
    \beta_0 + \ve{x}^\tr(\loc)\ve{\beta} + \ssp(\loc),
\end{align}
where $g(\cdot)$ is the link function between the conditional mean $\theta(\loc)$ and the
linear predictor $\eta(\loc)$, and $\tau^2$ is an additional parameter. As usual,
\(\beta_0\) and \(\ve{\beta}\) are the intercept and covariate effects, respectively. For
a set of locations $\locset = \{s_1, \dots, s_n\}$, which can be points, lines or regions,
this implies that the elements of a set of random variables \(\{Y(s_i): i = 1, 2, \dots,
n\}\) are conditionally independent given the covariates \(\{\ve{x}(s_i)\}\) and the
spatial stochastic process \(\{\ssp(\loc): \loc \in D\}\). This model assumes that the
random vector $\rve{\ssp}_l = [\ssp(\loc_1), \dots, \ssp(\loc_n)]$ follows a multivariate
Gaussian distribution.

The model defined by the \eq \eqref{eq:slgm} encompasses various classical spatial models
based on the specific type of $\loc$ (point or regions) and the properties of the spatial
stochastic process $\{\ssp(\loc)\}$. There are three main cases: i) In the classical
generalised geostatistical model, $\loc$ represents a point, and $\{\ssp(\loc)\}$ is
defined as a Gaussian process with zero-mean and covariance function
$\kappa(\loc,\loc^*))$. ii) In the generalised conditional autoregressive (CAR) spatial
model, $\loc$ represents a region and $\{\ssp(\loc)\}$ is defined as a Gaussian Markov
random field (\gmrf) with zero-mean and precision matrix $\m{Q}$. iii) In the geoadditive
model, $\loc$ represents a point, and $\{\ssp(\loc)\}$ is defined using basis functions.
Details of common stochastic processes can be found in \se \ref{sub:slgm-spp} of the
Supplementary Material (SM).

\subsection{Limitations}%
\label{sub:slgm-limitations}

Although the framework defined by \eq \eqref{eq:slgm} is attractive and encompasses
different types of spatial models, it has certain requirements. Firstly, it assumes that
responses are observed at the same type of support, such as either point-level or
aggregated-level data. Secondly, both responses and predictors need to be available at the
same sampling units to make inferences. Lastly, predictors should be available for
any location within the area of interest ($\locs \in \mathcal{D}$) to enable spatial
prediction at unobserved locations. However, these assumptions are often challenging to
satisfy in real-world applications and are closely tied to the concept of \textit{change
of support} discussed earlier.

In practice, it is common for practitioners to perform pre-processing steps when using the
models defined in Equation \ref{eq:slgm}. This is because the three assumptions mentioned
earlier are often not met, and existing approaches for dealing with change of support are
often inflexible. These approaches typically involve approximating covariance matrices on
aggregated supports using Monte Carlo integration or by averaging spatial random
variables. A common practice is to interpolate the spatial data to a consistent support,
which enables the application of models like \eqref{eq:slgm}. However, it can introduce
biases in mean predictions and also affect the accuracy of uncertainty
quantification.

Our research on this topic proposes models for continuous spatial variation
able to (i) include data at different support types for the response variable, (ii)
include predictors observed at different spatial sampling units, (iii) perform spatial
prediction of the response variable $Y(\locs)$ even in cases where the covariates
$\ve{x}(\locs)$ are not available for all $\locs \in D$.


\section{\large Spatial latent Gaussian model with change of support}%
\label{sec:cos}

 We present a hierarchical SLGM that addresses the
 challenge of handling spatial data with different supports in the response variable and
 predictors. We introduce the concept of change of support in spatial
 stochastic processes and discuss the properties of Gaussian processes (GPs) and linear
 combinations of basis functions when the support is altered (Section \ref{sub:cos-sp}).
 Subsequently, in Section \ref{sub:cos}, we propose a hierarchical SLGM
 that effectively handles the change of support by utilising latent spatial
 processes.
 Bayesian inference and spatial prediction of our approach are explained in
 sections \ref{sub:cos-inference} and \ref{sub:cos-prediction} of the SM.

\subsection{Change of Support on Stochastic Processes}%
\label{sub:cos-sp}

Let $\{\ssp(\locs): \locs \in \mathcal{D}\}$ represent a spatial stochastic process with a
continuous index set $\mathcal{D}$ and continuous state space $\mathbb{R}$. When the
support is changed, the process is defined over a different index set, resulting in
another process $\{\ssp(\locc): \locc \subset \mathcal{D}\}$, where $\locc$ represents any
geometry included in $\mathcal{D}$. Specifically, we focus on the case where $\locc$
represents a region, and $\ssp(\locc)$ is defined as a integral over that region:
\begin{equation}
  \label{eq:cos}
  \ssp(\locc) = \int_\locc \ssp(\locs)h(\locs) d\locs.
\end{equation}
In this equation, the function $h(\ve{s})$ serves as a weighting function that determines
the importance of each location $\ve{s}$ within the region $\locc$. In practical
applications, this function can take into account factors such as the sampling effort at a
specific location. Alternatively, $h(\ve{s})$ can be defined as $\abs{\locc}^{-1}$ to
obtain an average over $\locc$, or as $1$ for a total.

To effectively integrate datasets with different spatial supports through the change of
support, it is necessary to use stochastic processes $\{\rv{\ssp}(\ve{s})\}$ that allow
for efficient and accurate computation of \eq \eqref{eq:cos}. Specifically, when
considering a set of points $\mathcal{S} = \{\locs_i: i = 1, \dots, n_\locs\}$ and regions
$\mathcal{C} = \{\locc_i: i = 1, \dots, n_\locc\}$, we need the capability to obtain the
joint density of the associated random vectors $\rve{\ssp}_\locs = [\ssp(\locs_1),
\ssp(\locs_2), \dots, \ssp(\locs_m)]^\top$ and $\rve{\ssp}_\locc = [\ssp(\locc_1),
\ssp(\locc_2), \dots, \ssp(\locc_m)]^\top$. Furthermore, we should be able to establish
the connection or association between these random vectors $\rve{\ssp}_\locs$ and
$\rve{\ssp}_\locc$. In the following section, we discuss the properties with respect to
the change of support for Gaussian processes and linear combinations of basis functions.

\subsubsection{Gaussian Processes}%
\label{ssub:cos-sp-gp}

The main property for a $\gp(\mu(\locs), \kappa(\locs, \locs^*))$ is that the random
vector $\rve{\ssp}_\locs$ for any finite set of points $\mathcal{S} = \{\locs_i: i = 1,
\dots, n_\locs\}$ follows a multivariate normal distribution with vector mean
$\ve{\mu}_\locs = [\mu(\locs_1), \dots, \mu(\locs_{n_\locs})]$ and covariance matrix
$\m{\Sigma}_\locs$ with elements $(\m{\Sigma}_\locs)_{ij} = \kappa(\ve{s}_i,\ve{s}_{j})$.
When the support is changed, as presented in \eq \eqref{eq:cos}, the random vector
$\rve{\ssp}_\locc$ for any finite set of regions $\mathcal{C} = \{\locc_i: i = 1, \dots,
n_\locc\}$ follows also a multivariate normal distribution with mean $\ve{\mu}_\locc =
[\int_{\locc_1}\mu(\locs)d\locs, \dots, \int_{\locc_n}\mu(\locs)d\locs]$ and covariance
matrix $\m{\Sigma}_\locc$ \citep{gelfand2001change}. The elements of the covariance matrix are
defined by
  ${(\m{\Sigma}_\locc)}_{ij} = \frac{1}{\abs{\locc_i}\abs{\locc_j}}
  \int_{\locs \in \locc_i} \int_{\locs^* \in \locc_j} \kappa(\locs-\locs^*)
  d\locs d\locs^*$,
where $\abs{\cdot}$ is the area operator.

More generally, considering the set of points and regions $\mathcal{A} = \{\locs_1, \dots,
\locs_{n_\locs}, \locc_1, \dots, \locc_{n_\locc}\}$, the associated random vector
$\rve{\ssp}_a$ also follows a multivariate normal distribution with $\ve{\mu}_a =
[\ve{\mu}_\locs^\tr, \ve{\mu}_\locc^\tr]$ and covariance matrix
  $\m{\Sigma}_a =
  \left[\begin{array}{c|c}
    \m{\Sigma}_\locs      & \m{\Sigma}_{\locs\locc} \\ \hline
    \m{\Sigma}_{\locs\locc}^\tr & \m{\Sigma}_\locc
  \end{array}\right]
  $,
where
  ${(\m{\Sigma}_{\locs\locc})}_{ij} = \frac{1}{\abs{\locc_j}} \int_{s \in \locc_j}
  \kappa(\locs_i-\locs) d\locs$.
Hence, it is possible to derive the joint density function for any set of points and/or
regions, which can be utilised for statistical inference and spatial prediction. However,
both depends of the elements of the covariate matrix, which are commonly approximated
using Monte Carlo integration (see \se \ref{ssub:app-cos-sp-gp} of the SM).

%

\subsubsection{Linear Combinations of Spatial Basis Functions}%
\label{ssub:cos-sp-bf}

The evaluation of the linear combination of spatial basis functions
over any n-dimensional set of points $\df{S}$ can be expressed as
$\rve{\ssp}_\locs = \m{B}_\locs\bm{\delta}$ where $\rve{\delta}$ is a n-dimensional GMRF and the
row $i$ of $\m{B}_\locs$ is the evaluation of the basis function at point $\locs_i$. Under
a change of support,
the continuous process is projected to
\begin{align}
  \label{eq:bf_cos}
  \rv{\ssp}(\locc) & = \sum_{j=1}^{q_1}\sum_{l=1}^{q_2} \delta_{jl}
  \left(\int_{\locs \in \locc} b_{jl}(\locs) d\locs\right) = \ve{b}^\tr(\locc)\rve{\delta},
\end{align}
where $\rve{\delta}$ is the $q$-dimensional GMRF and $\ve{b}(\locc)$ is a vector
containing the integral of the two-dimensional basis functions $b_{jl}(\cdot)$ over
$\locc$. Hence, any random vector associated with a set of regions $\df{C}$ can be expressed
as $\rve{\ssp}_\locc = \m{B}_\locc\bm{\delta}$ where the $i$th row of $\m{B}_\locc$ is the
evaluation of the basis functions at region $\locc_i$.

Notice that, under change of support, our process remains similar and we only need to
update the basis functions as integrals over the region of interest while the main latent
process $\rve{\delta}$ does not need any transformation. This provides a close connection
between a subset of the process over any finite set of points $\df{S}$ and a set of regions
$\df{C}$. Both are simply a linear transformation of the latent GMRF $\rve{\delta}$ with
different and known design matrices $\m{B}_\locs$ and $\m{B}_\locc$ respectively.

\paragraph{Integrating basis functions:}

An important aspect in \eq \eqref{eq:bf_cos} is that we should be able to integrate the basis
functions over any arbitrary region $\locc$. Given that we define two dimensional basis
functions as the product of uni-dimensional basis functions, then the integral is
$
\int_{\locs \in \locc} b_{jl}(\locs) d\locs =
\int_{(s_1, s_2) \in \locc} b^1_{j}(s_1) b^2_{l}(s_2)ds_1ds_2
$.
This expression can be reduced for rectangular regions such as $s_1 \in [L_1, U_1]$ and
$s_2 \in [L_2, U_2]$,
$
  \int_{\locs \in \locc} b_{jl}(\locs) d\locs =
  \int_{L_1}^{U_1} b^1_{j}(s_1)ds_1 \times
  \int_{L_2}^{U_2} b^2_{l}(s_2)ds_2,
$
as it is only required to compute the integral of the univariate basis functions.
We use basis functions because the integral can be computed efficiently due to:
\begin{enumerate}
  \item The j-th basis spline $B_{jk}(x)$ of order $k$ is non-zero from knot $t_j$ to
    $t_{j+k}$.
  \item The integral from knot $t_j$ to an arbitrary value $x$ is
    \begin{equation*}
      \int_{t_j}^x B_{j,k}(t)dt =
      \left\lbrace
      \begin{array}{cl}
        \displaystyle\frac{t_{j+k} - t_j}{k} \sum_{i = j}^{s-1} B_{i,k+1}(x), &  t_j < x \leq t_{j+k}\\
        \displaystyle\frac{t_{j+k} - t_j}{k} \sum_{i = j}^{j+k-1} B_{i,k+1}(x), & t_{j+k} < x \\
      \end{array}
      \right.
    \end{equation*}
    for $s$ such that $t_{s-1} < x \leq t_s$.
  \item The integral can be evaluated from $t_i$ to $x$ such that $i\geq j$ and $t_i < x
    \leq t_{j+k}$,
    \begin{equation*}
      \int_{t_i}^x B_{j,k}(t)dt =
      \left\lbrace
      \begin{array}{cl}
        \displaystyle\frac{t_{j+k} - t_j}{k}
        \left(\sum_{r=0}^{s-1}B_{j+r,k+1}(x) - \sum_{r=0}^{i-j-1}B_{j+r,k+1}(t_i)\right),
        & t_j < x \leq t_{j+k}\\
        \displaystyle\frac{t_{j+k} - t_j}{k}
        \left(\sum_{r=0}^{j+k-1}B_{j+r,k+1}(x) - \sum_{r=0}^{i-j-1}B_{j+r,k+1}(t_i)\right),
        & t_{j+k} < x \\
      \end{array}
      \right.
    \end{equation*}
    for $s$ such that $t_{s-1} < x \leq t_s$.
\end{enumerate}
These results led to efficient computation of the integral by: i) evaluating it only when
required, according to the local definition of basis splines, and ii) computing the
integral depending only on few basis splines at order $k+1$. The proofs of these results
are provided in Section \ref{sec:app-basis-splines} of SM.
Hence, using linear combinations of basis functions we can integrate the process
efficiently and exactly in rectangular geometries.
Details of inference and prediction can be found in \se
\ref{ssub:app-cos-sp-bf} of SM.

\subsection{Spatial Latent Gaussian Model with Change of Support}%
\label{sub:cos}

We propose a model-based approach to integrate spatial data by extending
the spatial latent Gaussian models (SLGM) discussed in Section \ref{sec:slgm} and
leveraging the principles of change of support as outlined in Section \ref{sub:cos-sp}. We
begin by presenting a general framework for SLGM with change of support. Later, we provide
specific models for the Gaussian and Bernoulli cases and describe key details of Bayesian
inference for these models.

\subsubsection{General framework}%
\label{ssub:general-framework}

Our approach is founded on the existence of latent continuous processes that underlie the
fundamental mechanisms of the studied phenomena. It acknowledges that change of support
could also be required under transformations of the latent processes and considers that
measurement error is shaped by the characteristics of the observations, independent of the
latent process. With these fundamental principles in mind, our approach comprises three
key components: the \textit{latent Gaussian model}, the \textit{change of support model},
and the \textit{observation model}. A directed graph representing our model across the
three layers is presented in \fig \ref{fig:dag}.

The \textit{latent Gaussian model} closely resembles a linear geostatistical model, where
a latent response process $\{\rv{\eta}(\locs)\}$ is expressed as a function of zero-mean
latent predictors $\{\rv{V}_j(\locs)\}$ for $j=1,\dots,p$ and a zero-mean latent spatial
process $\{\ssp(\locs)\}$, capturing additional spatial variation. This relationship is
defined by the equation:
\begin{equation}
  \label{eq:lgm}
  \eta(\locs) = \beta_0 + \rve{V}(\locs)^\tr\ve{\beta} + \ssp(\locs), \quad~\text{for}~
  \ve{\locs}
  \in \locsset,
\end{equation}
In this model, $\beta_0$ serves as an intercept parameter, $\mean{\eta(\locs)} = \beta_0$,
while $\ve{\beta}$ represents a vector of regression coefficients associated with the
latent predictors $\rve{V}(\locs)$. The latent processes $\{\eta(\locs)\}$ and
$\{\rv{V}(\locs)\}$ are linked to the responses and predictors, respectively, which can be
observed at either point or aggregated levels.

\begin{figure}[htb]

\begin{tikzpicture}[>=latex,line join=bevel,scale=0.55]
\begin{scope}
  \pgfsetstrokecolor{black}
  \definecolor{strokecol}{rgb}{1.0,1.0,1.0};
  \pgfsetstrokecolor{strokecol}
  \definecolor{fillcol}{rgb}{1.0,1.0,1.0};
  \pgfsetfillcolor{fillcol}
  \filldraw (0.0bp,0.0bp) -- (0.0bp,508.0bp) -- (824.06bp,508.0bp) -- (824.06bp,0.0bp) -- cycle;
\end{scope}
\begin{scope}
  \pgfsetstrokecolor{black}
  \definecolor{strokecol}{rgb}{1.0,1.0,1.0};
  \pgfsetstrokecolor{strokecol}
  \definecolor{fillcol}{rgb}{1.0,1.0,1.0};
  \pgfsetfillcolor{fillcol}
  \filldraw (0.0bp,0.0bp) -- (0.0bp,508.0bp) -- (824.06bp,508.0bp) -- (824.06bp,0.0bp) -- cycle;
\end{scope}
\begin{scope}
  \pgfsetstrokecolor{black}
  \definecolor{strokecol}{rgb}{1.0,1.0,1.0};
  \pgfsetstrokecolor{strokecol}
  \draw (8.0bp,8.0bp) -- (8.0bp,500.0bp) -- (430.18bp,500.0bp) -- (430.18bp,8.0bp) -- cycle;
\end{scope}
\begin{scope}
  \pgfsetstrokecolor{black}
  \definecolor{strokecol}{rgb}{1.0,1.0,1.0};
  \pgfsetstrokecolor{strokecol}
  \draw (8.0bp,8.0bp) -- (8.0bp,500.0bp) -- (430.18bp,500.0bp) -- (430.18bp,8.0bp) -- cycle;
\end{scope}
  \node (x1) at (63.5bp,419.0bp) [draw,rectangle] {$\bm{x}_{1}$};
  \node (x2) at (63.5bp,254.0bp) [draw,rectangle] {$\bm{x}_{2}$};
  \node (ws1) at (220.44bp,419.0bp) [draw,circle,fill=blue!20] {$V_1(l^{[x]}_1)$};
  \node (y) at (493.28bp,254.0bp) [draw,circle,fill=green!20] {$\eta(\ve{s})$};
  \node (ws3) at (220.44bp,89.0bp) [draw,circle,fill=blue!20] {$V_3(l^{[x]}_3)$};
  \node (ws2) at (220.44bp,254.0bp) [draw,circle,fill=blue!20] {$V_2(l^{[x]}_2)$};
  \node (w4) at (376.03bp,34.0bp) [draw,circle,fill=green!20] {$W(\ve{s})$};
  \node (w3) at (376.03bp,144.0bp) [draw,circle,fill=green!20] {$V_3(\ve{s})$};
  \node (w2) at (376.03bp,254.0bp) [draw,circle,fill=green!20] {$V_2(\ve{s})$};
  \node (w1) at (376.03bp,364.0bp) [draw,circle,fill=green!20] {$V_1(\ve{s})$};
  \node (x3) at (63.5bp,89.0bp) [draw,rectangle] {$\bm{x}_{3}$};
  \node (y1) at (783.06bp,334.0bp) [draw,rectangle] {$\bm{y}_1$};
  \node (ys2) at (635.22bp,174.0bp) [draw,circle,fill=blue!20] {$\eta(l_2^{[y]})$};
  \node (ys1) at (635.22bp,334.0bp) [draw,circle,fill=blue!20] {$\eta(l^{[y]}_1)$};
  \node (y2) at (783.06bp,174.0bp) [draw,rectangle] {$\bm{y}_2$};
  \draw [->] (x1) ..controls (119.12bp,419.0bp) and (127.75bp,419.0bp)  .. (ws1);
  \draw [->] (x2) ..controls (119.12bp,254.0bp) and (127.75bp,254.0bp)  .. (ws2);
  \draw [->] (w1) ..controls (425.75bp,317.35bp) and (444.5bp,299.77bp)  .. (y);
  \draw [->] (w3) ..controls (425.75bp,190.65bp) and (444.5bp,208.23bp)  .. (y);
  \draw [->] (w4) -- (y);
  \draw [->] (x3) ..controls (119.12bp,89.0bp) and (127.75bp,89.0bp)  .. (ws3);
  \draw [->] (w1) ..controls (322.11bp,383.06bp) and (310.84bp,387.05bp)  .. (ws1);
  \draw [->] (y2) ..controls (733.87bp,174.0bp) and (725.1bp,174.0bp)  .. (ys2);
  \draw [->] (y1) ..controls (733.87bp,334.0bp) and (725.1bp,334.0bp)  .. (ys1);
  \draw [->] (y) ..controls (536.03bp,278.09bp) and (550.24bp,286.11bp)  .. (ys1);
  \draw [->] (y) ..controls (536.03bp,229.91bp) and (550.24bp,221.89bp)  .. (ys2);
  \draw [->] (w3) ..controls (322.11bp,124.94bp) and (310.84bp,120.95bp)  .. (ws3);
  \draw [->] (w2) ..controls (430.83bp,254.0bp) and (439.66bp,254.0bp)  .. (y);
  \draw [->] (w2) ..controls (321.61bp,254.0bp) and (313.03bp,254.0bp)  .. (ws2);
  \draw [dashed] (10.5, -0.5) rectangle (19.5, 15.5);
  \node[text width=5cm, anchor=west] at (10.5,14.85) {\footnotesize(I) \textit{Latent Gaussian model}};
  \draw [dashed] (5, -0.75) rectangle (25, 17.75);
  \node[text width=8cm, anchor=west] at (5,17) {\footnotesize(II) \textit{Change of support model}};
  \draw [dashed] (1, -1) rectangle (29, 19.25);
  \node[text width=8cm, anchor=west] at (1,18.5) {\footnotesize(III) \textit{Observation model}};
\end{tikzpicture}
  \caption{
    A spatial latent Gaussian model with a change of support, involving three observed
    predictors ($\bm{x}_i$) and two response datasets ($\bm{y}_j$). $\{\eta(\locs)\}$
    represents the latent dependent stochastic process, while $\{V_i(\locs)\}$ are latent
    predictors, and $\{W(\locs)\}$ is a latent process accounting for unexplained spatial
    variability. $\{V_i(l_i^{[x]})\}$ depicts the latent predictors projected to the
    supports of the observed predictors, while $\{\eta(l_j^{[y]})\}$ shows the projection
    of the latent dependent process to the supports of the observed responses.
  }
  \label{fig:dag}
\end{figure}
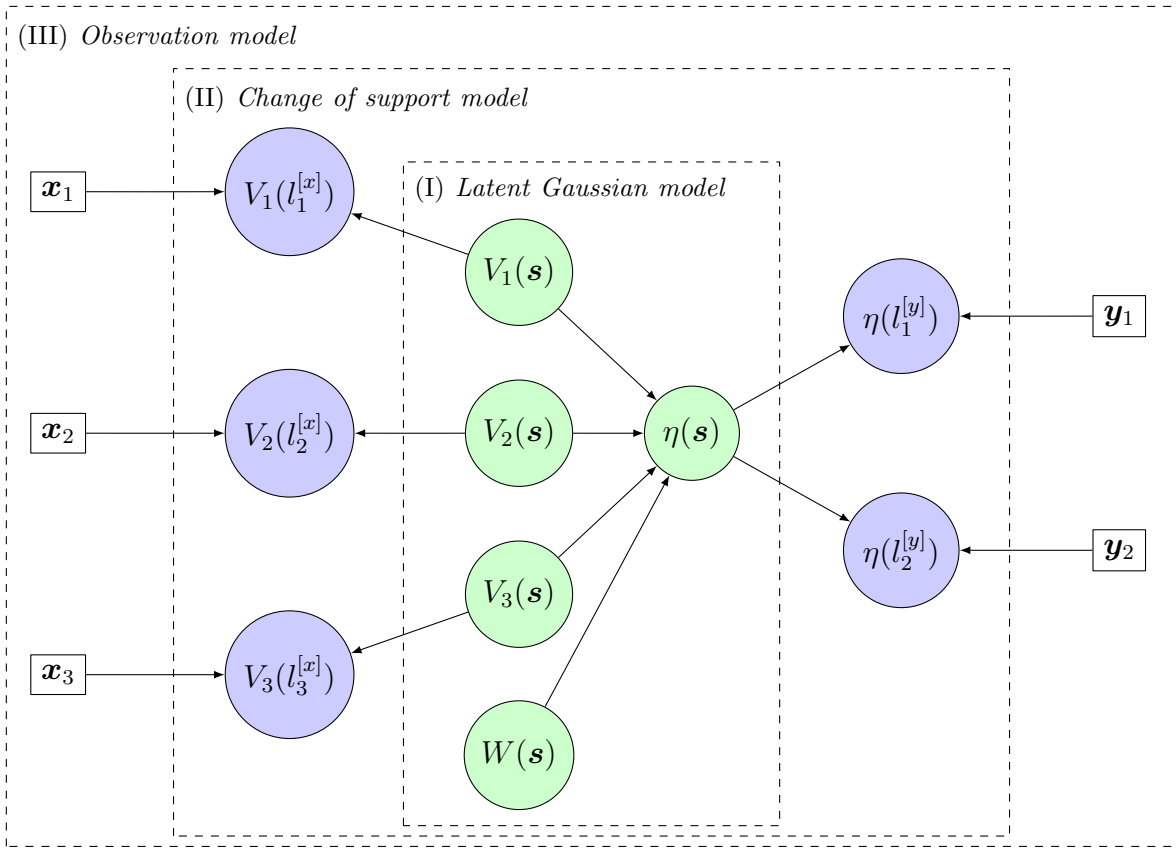

The \textit{change of support model} elucidates how processes within the latent Gaussian model can be projected onto different supports. This projection is not always required for
the processes themselves but may be necessary for transformations. We define the change of
support for the processes $\{\ssp(\locs)\}, \{V_j(\locs)\}$ and $\{\eta(\locs)\}$ under
transformations $h_w(\cdot), h_{v_j}(\cdot)$ and $h_{\eta}(\cdot)$ over the geometry $\locc
\subset \locsset$ as:
\begin{align*}
  \ssp(\locc, h_{w}) & = \frac{\int_{\locs \in \locc} h_w(\ssp(\locs)) d\locs}{\abs{\locc}},
  &
  \rv{V}_j(\locc, h_{v_j}) & = \frac{\int_{\locs \in \locc} h_{v_j}(\rv{V}_j(\locs)) d\locs}{\abs{\locc}},
  &
  \rv{\eta}(\locc, h_{\eta}) & = \frac{\int_{\locs \in \locc} h_{\eta}(\rv{\eta}(\locs))
  d\locs}{\abs{\locc}}.
\end{align*}
This notation emphasises that the change of support under a transformation is distinct from
the transformation of the change of support, i.e., \(\ssp(\locc, h_{w}) \neq
h_{w}(\ssp(\locc))\). The choice of the transformation \(h(\cdot)\) depends on the
relationship between the models proposed at the point and aggregated levels.

The \textit{observation model} defines the distribution or data generation mechanism
for observable variables (responses or predictors) at any location (point or geometry).
This model incorporates the change of support processes and additional parameters.
Specifically, let $y_{ki}$ represent the observed response value at location
$\loc_{ki}^{[y]}$ for the $i$-th sampling unit of the $k$-th source of information, where
$k=1,\dots,K$ and $i=1,\dots,n_k$. The associated random variable follows:
$
\rv{Y}_{k}(\loc_{ki}^{[y]}) \sim \mathcal{F}(\rv{\eta}(\loc_{ki}^{[y]}, h_{\eta}), \ve{\alpha}_{k}),
$
where $\ve{\alpha}_k$ denotes additional parameters specific to the source of information
$k$, required to define the data generation mechanism $F(\cdot,\cdot)$. These parameters
can account for measurement errors or mean biases across different data sources. Notice
that when observations are at point level, then $\rv{\eta}(\loc_{ki}^{[y]}, h_{\eta}) =
h_{\eta}(\rv{\eta}(\loc_{ki}^{[y]}))$.

In a similar fashion, for observed predictor values at location $\loc_{ji}^{[x]}$ for the
$i$-th sampling unit of the $j$-th predictor, with $j=1,\dots,p$ and $i=1,\dots,m_j$, the
random variable associated with these observed predictors follows:
$
\rv{X}_{j}(\loc_{ji}^{[x]}) \sim \mathcal{F}(\rv{\eta}(\loc_{ji}^{[x]}, h_{\eta}), \ve{\nu}_{j}),
$
where $\ve{\nu}_j$ also represent additional parameters, primarily aimed at defining
the variability of the measurement error in the predictors.

We have refrained from imposing specific distributions or data generation mechanisms, as
these depend on the particular random variables and point-level models. In the following
sections, we will provide specific details when dealing with Gaussian and
Bernoulli-distributed response random variables.

\subsubsection{Gaussian case}%
\label{ssub:gaussian-case}

In the Gaussian case, both the responses and predictors are assumed to follow a normal
distribution. The \textit{latent Gaussian model} for the process $\{\eta(\locs)\}$ remains
consistent with \eq \ref{eq:lgm}. We employ this model with the rationale that the latent
process $\{\eta(\locs)\}$ is directly linked to the response of interest
$\{\rv{Y}(\locs)\}$ in such a way that, in the absence of measurement error,
$\rv{Y}(\locs) = \rv{\eta}(\locs)$. Consequently, the aggregated model for any location
$\locc \subset \locs$ could be simply defined as $\rv{Y}(\locc) = \abs{c}^{-1}\int_{\locs
\in \locc}\rv{\eta}(\locs)d\locs$. Because that change of support is required directly on
the latent process (i.e. an identity transformation), the \textit{support model} with
respect to an arbitrary region $\locc \subset \locsset$ comprehend the following:
\begin{align}
  \label{eq:gaussian-model-cos}
  \ssp(\locc) & =  \abs{\locc}^{-1}\int_{\ve{s} \in \locc} \ssp(\ve{s}) d\ve{s}, \nonumber\\
  \rv{V}_j(\locc) & = \abs{\locc}^{-1} \int_{\ve{s} \in \locc} \rv{V}_j(\ve{s})d\ve{s},
  ~\text{for}~ j=1,\dots,p, \nonumber \\
  \rv{\eta}(\locc) & = \beta_0 + \rve{V}^\tr(\locc)\ve{\beta}_1 + \ssp(\locc) .
\end{align}
Notice that, using spatially weighted spline functions, the integrals of the processes are
reduced to the linear combinations of the integral with respect to the basis functions as
presented in \se \ref{sub:cos-sp}.

To define the \textit{observation model}, it is important to note that $\{\eta(\locs)\}$ and
$\{\eta(\locc)\}$ represent the process of interest at the point and aggregated levels
without measurement error. Therefore, the response variables at the point and aggregated
levels are simply a noisy version of the latent process $\{\eta(\locs)\}$. The extent of
measurement error depends on the scale of the sampling units and the measurement
instruments associated with the sources of information. Thus, we assume that the
characteristics of measurement errors are independent among different sources. Hence,
considering $\loc_{ki}^{[y]}$ as the location (point or geometry) of the $i$-th
sampling unit in the $k$-th source of information for the response variable, the response
model accounting for measurement error can be written as:
\begin{align*}
  \rv{Y}_{k}(\loc^{[y]}_{ki}) & = b_k + \rv{\eta}(\loc^{[y]}_{ki}) + \varepsilon_k(\loc^{[y]}_{ki}),
  \quad~\text{for}~ k = 1,\dots,K ~\text{and}~ i = 1, \dots, n_k,
\end{align*}
where $b_k$ and $\varepsilon_k(\loc_{ki}^{[y]}) \sim N(0, \tau^2_k(\loc_{ki}^{[y]}))$
are the intercept and error term of the
$k$-th source of information, respectively. To ensure identifiability of the model, we set
$b_k$ to zero for the most reliable data source, and the remaining $b_k$ terms are
interpreted as mean biases relative to the reliable data source. The error term is assumed
to be zero-mean and normally distributed with a variance function $\tau^2_k(\loc)$. If the
sampling units for a source of information are of the same size, then the variance
function can be defined as a constant (i.e., $\tau^2_k(\loc) = \tau^2_k$). However, if the
sampling units are irregular in size, the variance function can be defined in terms of the
size of the sampling units (e.g., $\tau^2_k(\loc) = \tau^2_k\log(\abs{\loc})$).

Likewise, considering $\loc_{ji}^{[x]}$ as the location of the $i$-th
sampling unit for the $j$-th predictor, the predictor model accounting for measurement
error can be expressed as:
\begin{align}
  \label{eq:gaussian-model-predictor}
  \rv{X}_j(\loc_{ki}^{[x]}) & = \alpha_j + \rv{V}_j(\loc_{ji}^{[x]}) +
  \xi_j(\loc_{ji}^{[x]}),
  \quad~\text{for}~ j = 1,\dots,p ~\text{and}~ i = 1, \dots, m_j,
\end{align}
where $\alpha_j$ and $\xi_j(\loc_{ji}^{[x]}) \sim N(0, \psi^2_j(\loc_{ji}^{[x]}))$
represent the intercept and error term of
the $j$-th predictor, respectively. The variance function $\psi^2_j(\loc)$ of the error
term can also be defined with respect to the observed sampling units, as explained above.

\subsubsection{Bernoulli case}%
\label{ssub:bernoulli-case}

Our approach can be used for modelling binary outcomes with spatial structure under a change
of support. We initially describe the underlying relationship between the point and
aggregate level models without measurement error before presenting the complete model.

Let $\{\rv{Y}(\locs): \locs \in \locsset\}$ represent a spatial process obtained by
binarising the continuous latent process $\{\eta(\locs)\}$ with respect to an unknown
threshold $\gamma$, as follows:
\begin{align*}
  \rv{Y}(\locs) & = \left\lbrace\begin{array}{cl}
    1, & \eta(\locs) > \gamma, \\
    0, & \eta(\locs) \leq \gamma.
  \end{array}\right.
\end{align*}
Here, $\{\eta(\locs)\}$ is the latent process defined in \eq \eqref{eq:lgm}. Under his
model, the probability of success at location $\locs$ is given by
$\Phi((\beta_0-\gamma)/\sqrt{\var{\rv{\eta}(\locs)}})$, where $\Phi(\cdot)$ represents the
cumulative distribution function of a standard normal distribution. This model is not
identifiable due to two reasons: (i) an increase in the threshold $\gamma$ can be offset
by an increase in the intercept $\beta_0$, and (ii) a multiplicative factor in
$\ssp(\locs)$ can be compensated by a multiplicative factor in $\gamma$, $\beta_0$, and
$\ve{\beta}$. An identifiable model can be achieved by fixing a specific value for
$\gamma$ or $\beta_0$, and setting a constant variance for either $\ssp(\locs)$ or
$\varepsilon(\locs)$. To define properly our point-level model, we set $\gamma=0$ and
$\var{\ssp(\locs)}=1$. It is worth noting that defining the threshold as $\gamma=0$ is
equivalent to excluding $\beta_0$ from the latent process $\{\rv{\eta}(\locs)\}$ and using
a threshold of $-\beta_0$.

Since the latent process $\{\rv{\eta}(\locs)\}$ is inherently linked to the response
binary process $\{\rv{Y}(\locs)\}$, it is natural to the define the aggregated spatial
binary process $\{\rv{Y}(\locc): \locc \subset \locsset\}$ in terms of the aggregated
latent process $\{\rv{\eta}(\locc): \locc \subset \locsset\}$. This definition is as
follows:
$
\rv{Y}(\locc) = \left\lbrace\begin{array}{cl}
  1, & \rv{\eta}(\locc) > \gamma, \\
  0, & \rv{\eta}(\locc) \leq \gamma.
\end{array}\right.
$
Here, the binarisation of both point and aggregated levels depends on the same threshold
$\gamma$. Consequently, the probability of success at location $\locc$ is determined by
$\Phi((\beta_0 - \gamma)/\sqrt{\var{\rv{\eta}(\locc)}})$. Because the aggregated model is
derived from the point-level model, the assumptions are propagated. This results in
both $\gamma = 0$, and a fixed value for $\var{\ssp(\locc)}$ due to $\var{\ssp(\locs)} =
1$.

We use the previous mapping between the point-level and aggregated-level models to
formulate the SLGM with a change of support for binary data. First, the \textit{latent
Gaussian model} is the same as in the general case (\eq \ref{eq:lgm}), with the
restriction that $\var{\ssp(\locs)} = 1$. Second, the \textit{change of support model} is
the same as in the Gaussian case (\eq \ref{eq:gaussian-model-cos}) with fixed variance for
$\ssp(\locc)$ due to $\var{\ssp(\locs)} = 1$.

To define the \textit{observation model}, consider $\loc_{ki}^{[y]}$ as the
location (point or geometry) of the $i$-th sampling unit in the $k$-th source of
information for the response variable. The response model can be expressed as:
\begin{align*}
  \rv{Y}_k(\loc_{ki}^{[y]}) & = \left\lbrace\begin{array}{cl}
    1, & \rv{Z}_k(\loc_{ki}^{[y]}) > 0, \\
    0, & \rv{Z}_k(\loc_{ki}^{[y]}) \leq 0.
  \end{array}\right.
  \quad~\text{for}~ k = 1,\dots,K ~\text{and}~ i = 1, \dots, n_k, \\
   \rv{Z}_k(\loc_{ki}^{[y]}) & = b_k + \eta(\loc_{ki}^{[y]}) +
   \varepsilon_k(\loc_{ki}^{[y]}),
\end{align*}
where $\{\rv{Z}_k(\cdot)\}$ is an auxiliary process, for the $k$-th source, that allows us
to introduce bias-related parameters $b_k$ and error terms $\varepsilon_k(\cdot)$ to
account for measurement error. The inclusion of the additional intercept term $b_k$ is
equivalent to reducing the binarising threshold to $\gamma-b_k$. Both approaches can
handle varying the likelihood of success among sources. Similar to the Gaussian case, $b_k$ is
set to zero to the most reliable source of information. The error term
$\varepsilon_k(\loc_{ki}^{[y]}) \sim N(0, \tau^2_k(\loc_{ki}^{[y]}))$
for the $k$-th source is assumed to be zero-mean normally
distributed with a variance function $\tau^2_k(\loc)$, which may depend on the measure of
geometry $\loc$. Finally, the predictor model that accounts for measurement error is
defined as in \eq \eqref{eq:gaussian-model-predictor}.


\section{Simulation Study}
\label{sec:simulation-study}

We performed both one-dimensional (Section \ref{sub:one_dimensional}, SM) and
two-dimensional (Section \ref{sub:two_dimensional}) simulation studies that involve
various configurations of sampling units. Additionally, a simulation study for multiple
binary data and aggregated predictors is presented in Section \ref{sub:sim_full}.

In Section \ref{sub:one_dimensional} and \ref{sub:two_dimensional}, we explore scenarios
where data is observed in the following ways:  \textit{regular grids}, characterised by
uniform sampling unit sizes and regular spacing across the entire area of interest;
\textit{irregular grids}, where sampling unit sizes vary, and spacing is non-uniform but
coverage remains complete; \textit{sparse sampling units}, featuring varying sampling unit
sizes and sparse spatial distributions; and \textit{overlapping sampling units}, where unit
sizes differ, overlap, and coverage is incomplete. Our aim is to underscore the importance
of properly considering spatial data support, showcase the adaptability of our approach to
diverse scenarios, and account for measurement errors.

In Section \ref{sub:sim_full}, we simulate binary data with two sources of information and
predictors, where all the sources have different spatial supports. We demonstrate that
Bayesian inference and prediction are feasible using the methodology described above.

\subsection{Two dimensional}%
\label{sub:two_dimensional}

Considering a set of sampling units $\bm{c} = \{c_1, c_2, \dots, c_n\}$ for any of the
four configurations described above, the corresponding observations are treated as
realisations of the process $\{W(s)\}$ over the sampling units $c_i$ with an additional
measurement error $\varepsilon_i$:
\begin{equation}
  Y_i = \int_{s \in c_i} W(s)ds + \varepsilon_i, \quad \varepsilon_i \sim N(0, \sigma^2_i).
  \label{eq:sim_model}
\end{equation}
We assume that the variance of the measurement error is constant ($\sigma^2_i = \sigma^2$)
when the sampling units have the same size (regular grids). However, it becomes inversely
proportional to the area ($\sigma^2_i = \sigma^2/\abs{c_i}$) when they do not (irregular
grids, sparse sampling units, and overlapping sampling units). It is worth noting that
under this assumption, the measurement error is also influenced by the sampling units.

We define the latent process ${W(s)}$ using a linear combination of basis functions, as
explained in Section \ref{ssub:cos-sp-bf}. When sampling units are arranged in a regular
grid, we compare two models: a \textbf{naive model} (M1) that uses the centroids of the
regions $\locc_i^*$ to model the data ($Y_i = W(\locc_i^*) + \varepsilon_i$) and a
\textbf{support model} (M2) that considers the spatial sampling units, as shown in \eq
\ref{eq:sim_model}. On the other hand, when dealing with irregular grids, sparse sampling
units, and overlapping sampling units, we compare three models: the \textbf{naive model}
(M1) as explained above, a \textbf{heteroscedastic model} (M2) that accounts for the dependence of measurement error on unit sizes, and a \textbf{support and heteroscedastic
model} (M3) that simultaneously considers spatial support and measurement error
heteroscedasticity.

Bayesian inference is carried out through Gibbs sampling. In all of our experiments, we
generated 10,000 MCMC samples, removed the first 2,000 samples, and retained every 1 out
of 5 samples. Subsequently, the continuous latent process is predicted using the MCMC
samples and the basis functions associated with the sampling units. The models are compared
by analysing the characteristics of the predicted mean function, the uncertainty
associated with the prediction, and their posterior probability of overpredicting the true
process at $m$ locations within the study area, defined as:
\begin{equation}
  p(s_j) = \pr{W(s_j) > w(s_j) \mid \ve{y}}, \quad  \text{for}~ j = 1,\ldots,m.
\end{equation}

\subsubsection{Regular grid}
\label{ssub:2d_regular_grid}

For this experiment, we define the latent process ${W(\locs)}$ using 400
basis functions of degree 2 and a GMRF of order 1 with a scale parameter of $\kappa =
0.09$. Similar to the 1-dimensional case, the support model (M2) outperforms the naive
model (M1). Specifically, when examining the predicted mean of M1, it becomes evident that
this model tends to underestimate the peaks and overestimate the valleys, resulting in
more rapid changes (\fig \ref{fig:sim2d_regular}, panels B1-B2). When analysing the
posterior probability of overprediction, we observe that M1 exhibits more frequent values
close to 1 and 0, indicating a tendency for overprediction and underprediction
respectively (\fig \ref{fig:sim2d_regular}-C1). Conversely, the support model
demonstrates a higher frequency of probabilities of overprediction close to 0.5 (\fig
\ref{fig:sim2d_regular}-C2). This behaviour is also linked to the fact that the
uncertainty tends to be underestimated for locations near the centroids of the sampling
units in the naive model.

\begin{figure}[H]
\begin{center}
  \includegraphics[width=\linewidth]{./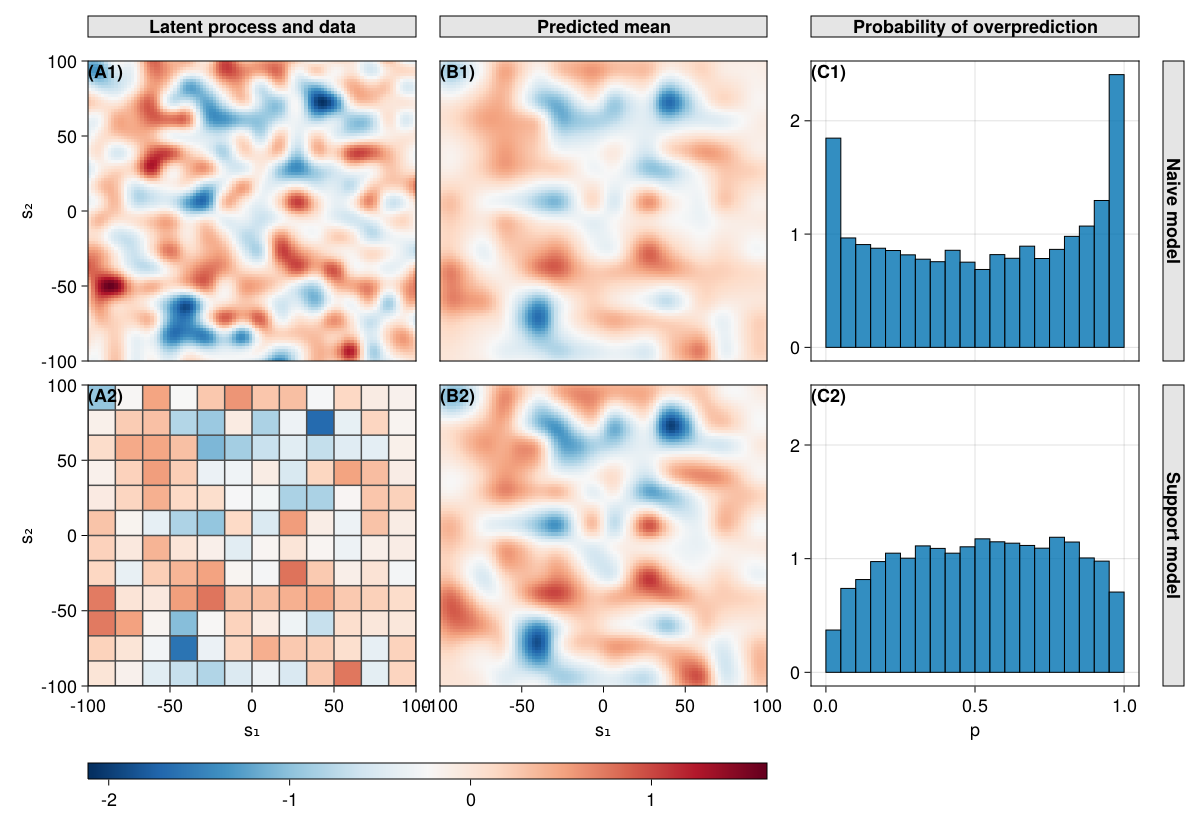}
\end{center}
  \caption{Comparison between the \textbf{naive model} (M1) and the \textbf{support model}
  (M2) when data is observed in a \textbf{regular grid}. Panel A1 displays
  the process of interest, while panel A2 shows the observed data. Panels
  B1-B2 present the predicted mean, and panels C1-C2 feature histograms of
  the posterior probability of overpredicting the underlying process.
  }
  \label{fig:sim2d_regular}
\end{figure}

\subsubsection{Irregular grid, sparse sampling units and overlapping sampling units}
\label{ssub:2d_irregular_grid}

For the irregular grid and sparse sampling units experiments, the latent process
${W(\locs)}$ was defined with 400 basis functions, while 225 basis functions were used for
the experiments with overlapping sampling units. In all three experiments, GMRFs of order
1 with a scale parameter of $\kappa = 0.09$ were employed.

For all three experiments, we consistently observed that both the naive model (M1) and the
heteroscedastic model (M2) tend to underestimate the mean function near peaks and
overestimate it around valleys, a behaviour reminiscent of what was observed in regular
grids (\fig \ref{fig:sim2d_irregular}, \ref{fig:sim2d_sparse},
\ref{fig:sim2d_overlapping} in SM; panels B1-B2). Conversely, the support and heteroscedastic
model (M3) consistently yields predicted means closer to the true process values for both
peaks and valleys (\fig \ref{fig:sim2d_irregular}, \ref{fig:sim2d_sparse},
\ref{fig:sim2d_overlapping} in SM; panels B3). As a result, the predicted mean range for the
support and heteroscedastic model (M3) is generally wider compared to that of M1 and M2.
This underestimation and overestimation tendency observed in M1 and M2 is reflected in the
posterior probability of overprediction, which shows high-frequency values close to 0 and
1 (\fig \ref{fig:sim2d_irregular}, \ref{fig:sim2d_sparse}, \ref{fig:sim2d_overlapping} in
SM;
panels C1-C2). In contrast, the support and heteroscedastic model (M3) exhibits
significantly lower levels of overprediction and underprediction when compared to M1 and
M2 (\fig \ref{fig:sim2d_irregular}, \ref{fig:sim2d_sparse}, \ref{fig:sim2d_overlapping}
in SM;
panel C3).

A noteworthy attribute of the support and heteroscedastic model (M3) is its
proficiency in modelling data acquired under various sampling unit configurations. 
It captures essential properties of the underlying process and provides a suitable
uncertainty quantification, even with sparse and overlapping
sampling units. This underscores the effectiveness of the model in predicting the continuous
latent process when data are observed on aggregated units.

\subsection{Multiple binary data and aggregated predictors}%
\label{sub:sim_full}

In this section, we present a more realistic scenario involving binary response data
observed by two instruments, where the second includes a bias. Predictors are observed at
two different aggregated resolutions. The objective is to recover the parameters of the
system and perform predictions of the latent processes.

Panels (A) and (B) in \fig \ref{fig:sim_full} depict the latent processes
$\rv{V}_1(\locs)$ and $\rv{V}_2(\locs)$ along with observations at regular sampling units
with lengths of $5.71$ and $2.22$, respectively. Panel (C) presents the latent process
$\eta(\locs) = 0.7\rv{V}_1(\locs) - 0.6\rv{V}_1(\locs) + \rv{W}(\locs)$, with noisy values
having unit lengths of $4$ and $2.86$, respectively. It is important to note that the
second source includes a bias of $0.3$. Finally, Panel (D) shows the binary observations
from the two sources at resolutions $4$ and $2.86$, respectively. These observations are
simply a binarisation around $0$ of the noisy values shown in Panel (C).

We conducted Bayesian inference with the aggregated predictors and binary aggregated
outcomes for the change of support model, as outlined in Section
\ref{ssub:bernoulli-case}, using the Gibbs sampling algorithm explained in Section
\ref{ssub:inference}. The algorithm was executed with 10,000 iterations, discarding the
initial 2,000 iterations and retaining every 10th iteration. In Panels (A), (B), and (D)
of Figure \ref{fig:sim_full}, the predictive samples of the latent processes are shown as
grey solid lines.

For the first predictor (Panel A), the predictive samples effectively capture the pattern
of the latent process, and the uncertainty bounds appropriately cover the latent process.
Regarding the second predictor (Panel B), the predictive samples also capture the patterns
of the latent process, exhibit a realistic uncertainty, and show a small mean bias. This
bias is being accounted for by the intercept parameter $\alpha_2$. The key results are
observed in Panel (D), where the predictive samples of the latent process $\eta(\locs)$
accurately capture its patterns. Importantly, these predictions from binary aggregated
outcomes and aggregated predictors do not exhibit the problems of approaches that disregard the support discussed in previous sections. Additionally, the mean bias in the latent
predictor $\rv{V}_2(\locs)$ does not introduce bias in the prediction of the main process
of interest $\eta(\locs)$, as it is effectively compensated by the intercept parameter in
the response variable.

\begin{figure}[H]
\begin{center}
  \includegraphics[width=\linewidth]{./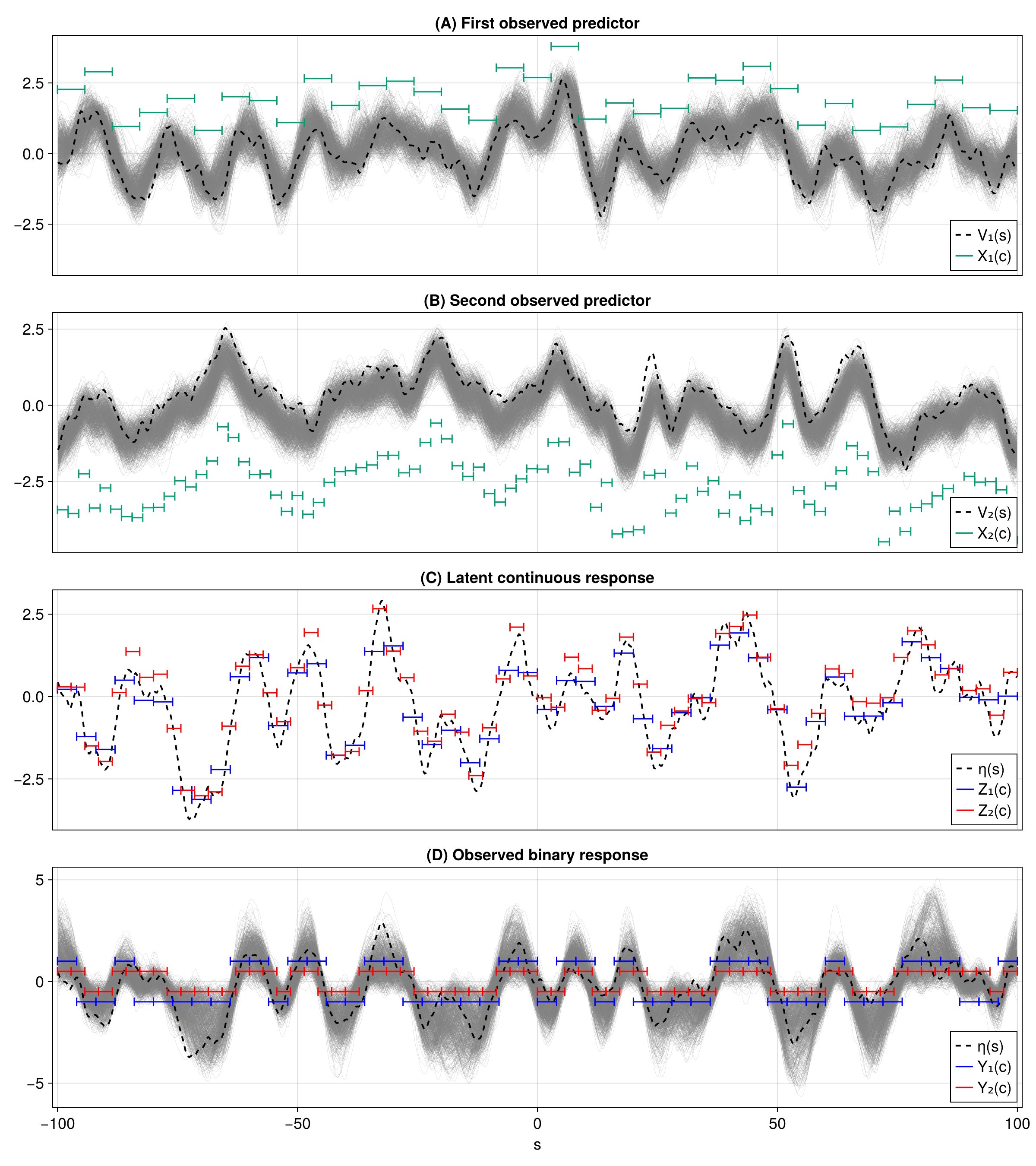}
\end{center}
  \caption{Scenario with two predictors and two binary data sources. Panels (A-B) depict
  latent processes with observed values at aggregated supports. Panel (C) illustrates the
  latent process $\eta(\locs)$ with noisy change of support, and Panel (D) presents binary
  responses. Gray solid lines in all panels represent posterior predictive samples of the
  latent processes.}
  \label{fig:sim_full}
\end{figure}


\section{Land suitability modelling in Rhondda Cynon Taf}%
\label{sec:case-studies}

Land suitability refers to the appropriateness of a piece of land for a specific use or
purpose \citep{jafari2010land}. This concept is crucial in land use planning, resource
management, and environmental conservation. While land suitability itself is not directly
observable, it can be inferred through the examination of climatic, internal soil and
external soil characteristics \citep{wang1994use}. In this section, we employ land cover
data and predictors at various resolutions to predict land suitability for
\textit{improved grassland} in Rhondda Cynon Taf, a county borough in Wales. The resulting
predictions can be used for further analysis, as demonstrated, for example, in
\citet{brown2022agent}, where land suitability surfaces are used as input to generate
future scenarios in the British land use system.

\subsection{Data description}%
\label{sub:ls-data}

The data employed in our land suitability modelling is depicted in Figure
\ref{fig:ls-data}. In Panel (A), we showcase the \textit{land cover} data extracted from
the 2017 UK land cover map at a $25m$ resolution, comprising $884,196$ cells for our
selected study area \citep{marstonland}. This dataset is provided by the UK Centre for
Ecology and Hydrology (UKCEH) and can be accessed at
\url{https://www.ceh.ac.uk/data/ukceh-land-cover-maps}. Moving to Panel (B), we present
\textit{elevation} data at a $25m$ resolution comprising $884,642$ cells, supplied by
Copernicus and available at
\url{https://spacedata.copernicus.eu/collections/copernicus-digital-elevation-model}. It
is important to note that, despite both datasets having a resolution of $25m$, their
support does not align.

Panels (C-D) in Figure \ref{fig:ls-data} depict growing degree days (GDD) and soil
moisture surplus (SMS) at $1km$ resolution comprising $549$ cells. These variables play a
significant role in determining land suitability and are available as annual means for a
period of 20 years (1991-2011) describing the general characteristics of the land. While
air temperature variables (maximum, minimum, and mean) and soil moisture deficit (SMD) are
also available, the former ones exhibit high correlation with growing degree days ($\rho >
0.95$), and the latter is zero constant across the extent of Rhondda Cynon Taf;
consequently, they were not included in our analysis. For detailed information on the
computation of growing degree days refer to \citet{robinson2017trends,
robinson2023climate}. Soil moisture was computed using the method outlined by
\citet{cosby1984statistical}, incorporating evapotranspiration, precipitation, and
available water content \citep{robinson2023climateb, fao2012harmonized}.

\begin{figure}[htb]
\begin{center}
  \includegraphics[width=\linewidth]{./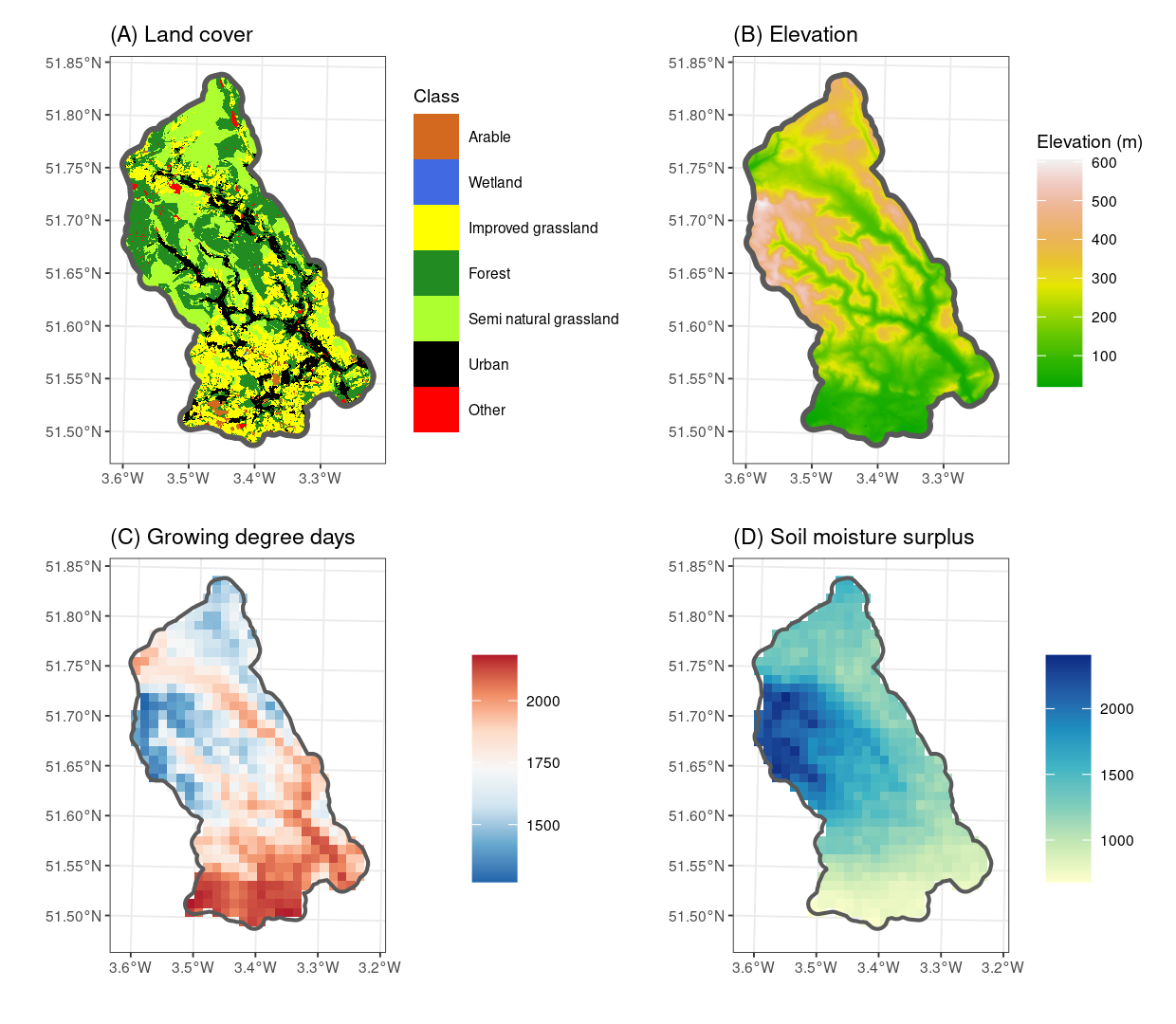}
\end{center}
  \caption{Environmental variables of Rhondda Cynon Taf: (A-B) Land cover and elevation at
  $25$m, and (C-D) Growing degree days (GDD) and soil moisture surplus (SMS) at $1$km.}
  \label{fig:ls-data}
\end{figure}

\subsection{Modelling with change of support}%
\label{sub:ls-modelling}

Consider $\rv{LC}(\locc_{i}^{[y]})$ as the land class of interest at cell $i=1,
\dots, n$, and $\rv{X}_j(\locc_{jk}^{[x]})$ as predictor $j=1,\dots,p$ at cell $k$ of the
data source $j$ ($\locc_{jk}^{[x]}$). The \textit{observation model} defines land cover as
a binarisation of a latent process $\{\rv{Z}(\locc_{i}^{[y]})\}$ which depends on a
spatial process $\{\rv{\eta}(\locc_{i}^{[y]})\}$ and an error term
$\varepsilon(\locc_{i}^{[y]}) \sim N(0, \tau^2)$, and it defines the predictors
$\rv{X}_j(\locc_{jk}^{[x]})$ in terms of latent processes $\{\rv{V}_j(\locc_{jk}^{[x]})\}$
and an error term $\xi_j(\locc_{jk}^{[x]}) \sim N(0, \psi^2_j)$.
\begin{align*}
  \rv{LC}(\locc_{i}^{[y]}) & = \left\lbrace\begin{array}{cl}
    1, & \rv{Z}(\locc_{i}^{[y]}) > 0, \\
    0, & \rv{Z}(\locc_{i}^{[y]}) \leq 0.
  \end{array}\right.
  \quad\quad~\text{for}~  i = 1, \dots, n, \\
   \rv{Z}(\locc_{i}^{[y]}) & = \eta(\locc_{i}^{[y]}) +
   \varepsilon(\locc_{i}^{[y]}), \\
  \rv{X}_j(\locc_{jk}^{[x]}) & = \alpha_j + \rv{V}_j(\locc_{jk}^{[x]}) +
  \xi_j(\locc_{jk}^{[x]}),
  \quad~\text{for}~ j = 1,\dots,p ~\text{and}~ k = 1, \dots, m_j.
\end{align*}

Note that $\eta(\locc_{i}^{[y]})$ and $\rv{V}_j(\locc_{jk}^{[x]})$ capture that main
spatial patterns presented in land cover $\rv{LC}(\locc_{i}^{[y]})$ and the predictor
$\rv{X}_j(\locc_{jk}^{[x]})$, respectively. These processes are related to continuous
processes in the \textit{change of support model} such as
\begin{align*}
  \eta(\locc_{i}^{[y]}) & = \int_{\locs \in \locc_{i}^{[y]}} \eta(\locs) d\locs, &
   \ssp(\locc_{i}^{[y]}) & = \int_{\locs \in \locc_{i}^{[y]}} \ssp(\locs) d\locs, &
  \rv{V}_j(\locc_{jk}^{[x]}) & = \int_{\locs \in \locc_{jk}^{[x]}} \rv{V}_j(\locs) d\locs.
\end{align*}

The relationship between the continuous latent processes of interest is defined in the
\textit{latent model}. Considering $LS(\locs)$ as the land suitability at location
$\locs$, then we define
$
   \eta(\locs)  = \beta_0 + \rv{LS}(\locs) + \ssp(\locs) \quad \mbox{and} \quad
   \rv{LS}(\locs)  = \rve{V}^\tr(\locs)\ve{\beta}_1.
$
This implies that land suitability depends on the observed predictors through
$\rve{V}(\locs)$ and influences land cover through $\eta(\locs)$. In this model, one of
our primary interests is to predict the latent continuous process that defines land cover
$\{\rv{Z}(\locs)\}$. However, more importantly, we aim to predict the latent land
suitability $\{\rv{LS}(\locs)\}$ at different supports while considering the associated
uncertainty.


\subsection{Results}%
\label{sub:ls-results}

Urban areas were masked out to avoid assuming that land suitability
determined by the current urban status. This decision aligns
with our objective to characterize land suitability based on climatic, internal soil, and
external soil characteristics, aiming to capture the variability that influences land
suitability beyond the presence of urban development.
We did not include elevation as it was strongly correlated with GDD. The results described
below are based on the model of Section \ref{sub:ls-modelling} including
GDD and SMS as predictors.

\subsubsection{Inference}%
\label{ssub:ls-inference}

\fig \ref{fig:app_inf_latent_process} of SM illustrates the inference of the latent processes in
the model. It can be observed that the mean of the latent process $\{\delta_{v_1}\}$
captures the patterns observed in growing degree days (Figure \ref{fig:ls-data}-C), while
the mean of $\{\delta_{v_2}\}$ captures the patterns observed in soil moisture (\fig
\ref{fig:ls-data}-D). The additional spatial variation not explained by either of the
latent predictors is captured by $\{\delta_w\}$. On the bottom panels, we can also observe
the uncertainty associated with these latent processes, with similar patterns for the
uncertainty of GDD and SMS.
The $95\%$ estimated credible interval for the intercept is $(-0.784,-0.647)$, related to
the global proportion of observed cells with improved grassland. The latent predictor with
respect to growing degree days has a positive effect (CI: $0.881, 1.04$) in defining the
chance of a cell being improved grassland, while the latent predictor related to soil
moisture surplus has also a positive effect (CI: $0.12, 0.274$).

\subsubsection{Prediction}%
\label{ssub:ls-prediction}

Our model can provide predictions at different resolutions, considering the properties
associated with changes in support. The prediction of the
latent
land
cover at 100m, 500m, and 1km can be observed in Figure \ref{fig:app_pred_lc}. All
predictions capture the main patterns observed in the distribution of improved grassland
(Figure \ref{fig:ls-data}-A). Particularly, it can be seen that north and west areas have
lower levels of improved grassland, and uncertainty is higher in those areas.
High levels of latent land cover are predicted even in areas where urban zones
are present because we could easily mask out the urban cells and remove their influence.
This task is not trivial if land-cover data is aggregated. Additionally, the
standard deviation of predictions becomes smaller as we reduce the resolution of the
process.

\begin{figure}[htb]
\begin{center}
  \includegraphics[width=\linewidth]{./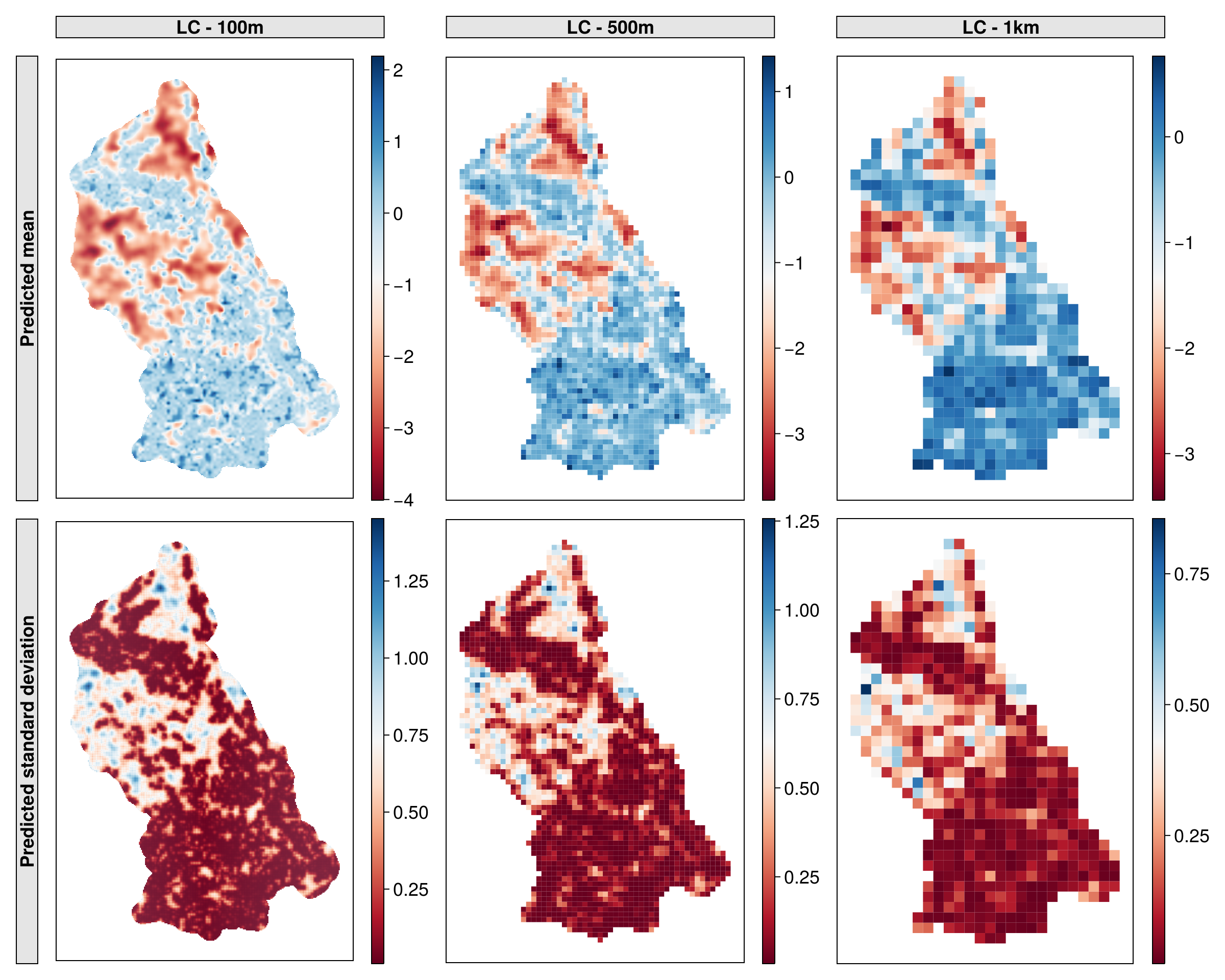}
\end{center}
  \caption{Prediction of latent (continuous) land cover at 100m, 500m and 1km of resolution.}
  \label{fig:app_pred_lc}
\end{figure}

\begin{figure}[htb]
\begin{center}
  \includegraphics[width=\linewidth]{./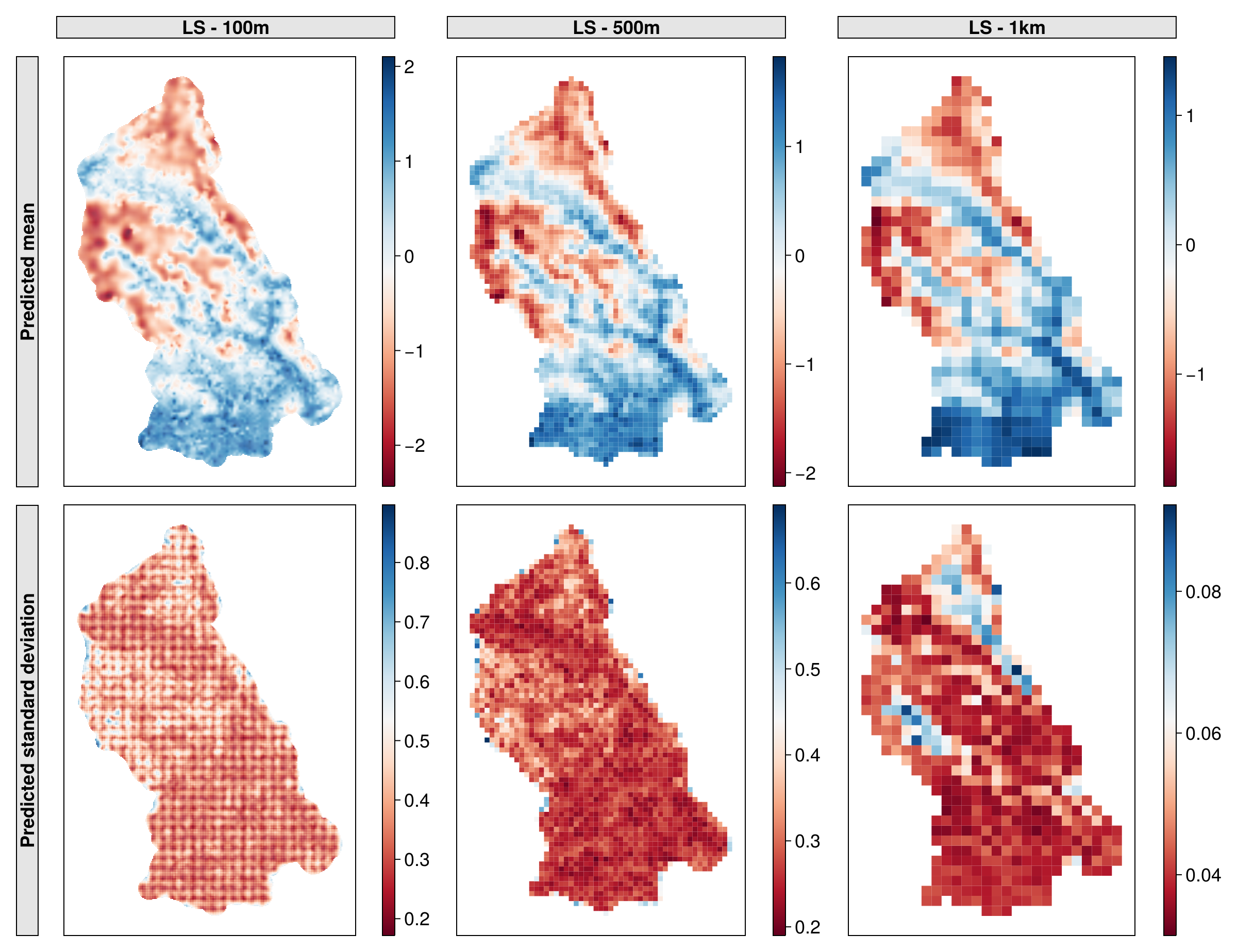}
\end{center}
  \caption{Prediction of land suitability at 100m, 500m and 1km of resolution.}
  \label{fig:app_pred_ls}
\end{figure}

Although our model can predict the latent
land cover, our primary goal is
to predict land suitability, interpreted as the variability explained by intrinsic
variables.
\fig
\ref{fig:app_pred_ls} displays the mean and standard deviation of land suitability
predictions for improved grassland at 100m, 500m, and 1km resolutions.
We observe high levels of suitability in southern areas and low levels
are evident
in the western regions. Moderately low levels of suitability are found in
the north-east. The uncertainty is higher in southern and western areas at
different resolutions. Predictions at different resolutions are
compatible due to the properties of our approach. These resulting surfaces at the desired
resolution can be utilized to predict future scenarios or model land use systems, as
demonstrated in \citet{brown2022agent}.

\section{Discussion}
\label{sec:discussion}


We developed an extension of SLGM that can
accommodate different spatial rectilinear supports for both response and predictor data
sources. Our approach consists of three main components: the \textit{latent Gaussian
model} defining system relationships at a continuous level, the \textit{change of support
model} projecting latent processes to other supports, and the \textit{observation model}
describing data generation mechanisms while acknowledging possible mean biases between
data sources and heteroscedasticity in measurement errors.

In this framework, employing a spatial process with a closed form to
efficiently handle projection to other supports is crucial. We proposed the use of a linear
combination of Gaussian Markov random fields (GMRF) and basis splines, ensuring that the
projection to other supports depends only on the GMRF and the integral of basis
splines over geometries of interest. This approach provides computational advantages, as
the projection does not need approximations on rectilinear grids. The transparent
relationship between the continuous process and the aggregated process is facilitated
through the defined GMRF. The model is computationally efficient, leveraging the
sparse properties of GMRFs and basis splines, enabling inference and prediction with
change of support for large datasets.

Through data simulation, we assessed the efficacy of our model in one and two-dimensional
spaces, considering regular grids, irregular grids, sparse sampling units, and overlapping
sampling units. Consistently, we observed that models neglecting the support introduce
biases in mean and uncertainty quantification and are more prone to over and
underprediction. Conversely, our model captures the main trends of the underlying
processes, providing reliable estimates of uncertainty and exhibiting low levels of
over and underprediction. We further evaluated the adequacy of our model on a more complex
simulated dataset with binary outcomes, different supports for two data sources, and
predictors observed at varying resolutions. Our model not only provided predictions about
the latent processes but also recovered patterns in the latent response process with
reliable uncertainty quantification.

Furthermore, we demonstrated the applicability of our approach in modeling land cover at a
25m resolution in Rhondda Cynon Taf (Wales), utilizing elevation (25m), growing degree
days (1km), and soil moisture surplus (1km). Our model predicted spatial latent processes
behind these variables and provided predictions of latent land cover and land suitability
for improved grassland at various resolutions (100m, 250m, and 1km) along with associated
uncertainties. Our results improved upon previous analyses conducted at a 1km resolution,
addressing differences in support and computational challenges.

Future avenues involve approximating the change of support onto irregular
geometries using a quadtree representation or efficient Monte Carlo
integration.
Additionally, we aim to extend this work to
spatial point
processes, and model jointly with aggregated count data. This
requires
establishing an appropriate relationship between aggregated and continuous processes in
under transformations of latent processes. Efficient sampling
algorithms for the full model are also required due to the presence of several latent
processes, which can be achieved by marginalizing some of the latent processes.

%

\bibliographystyle{apalike}
\bibliography{library,library-additional,library-application}

\newpage

\renewcommand{\appendixpagename}{Supplementary Material}

\begin{appendices}

  In this supplementary material, we delve into theoretical insights and proofs regarding
  the integral of basis splines in \se \ref{sec:app-basis-splines}. We then explore
  common spatial stochastic processes utilized in spatial latent Gaussian models (SLGM) in
  \se \ref{sec:app-slgm}. Additionally, we discuss the theory of SLGM with change of
  support in \se \ref{sec:app-slgm-cos}, covering Bayesian inference and spatial
  prediction. Further details on simulation studies are provided in \se
  \ref{sec:app-simulation-studies}. The MCMC algorithm chains for our case study are
  showcased in \se \ref{sec:app-mcmc-chains}. Lastly, spatial prediction of the latent
  processes for land suitability is presented in Section \ref{sec:app-case-studies}.


\section{Basis splines}
\label{sec:app-basis-splines}

\subsection{Introduction}

\begin{definition}
  Let $\ve{\xi} := (\xi)_1^{l+1}$ be a strictly increasing break sequence, and $\ve{\nu}
  := (\nu)_2^l$ represent homogeneous conditions. Then, $\Pi{<k, \ve{\xi}, \ve{\nu}}$ is
  defined as the space of piecewise $k$-th order polynomial functions.
\end{definition}

A function belonging to the space $\Pi_{<k, \ve{\xi}, \ve{\nu}}$ is characterized by
$k$-order polynomials from $\xi_i$ to $\xi_{i+1}$ and satisfies homogeneous conditions,
specifically $p_i^{(j-1)}(\xi_i) = p_{i+1}^{(j-1)}(\xi_i)$ for $j = 1, \dots, \nu_i$ and
$i = 2, \dots, l$. As an example, when $k = 4$ and ${\nu_i}_{2}^l = 3$, this space is
associated with natural cubic splines.

\textit{Basis splines or B-splines}, defined over a non-decreasing sequence of knots
$\ve{t}$, serve as efficient basis functions for the space of piecewise polynomial
functions $\Pi_{<k, \ve{\xi}, \ve{\nu}}$ \citep{boor2001practical}. The knots $\ve{t}$ are
determined by $\ve{\xi}$ and $\ve{\nu}$ in such a way that $\xi_i$ occurs $k - \nu_i$
times in $\ve{t}$. Note that in the case of natural cubic splines, the breaks $\xi_i$
occur only once in $\ve{t}$.

\begin{definition}
  \label{def:bsplines}
  Considering a non-decreasing sequence of knots $\ve{t}$, the $j$-th \textit{basis
  spline} of order $k$ is defined as
  \begin{equation*}
    B_{j, k}(x) = (t_{j+k} - t_j)[t_j, \dots, t_{j+k}](\cdot-x)^{k-1}_+, ~~ x \in \real,
  \end{equation*}
  which is equal to
  \begin{equation*}
    B_{j, k}(x) = [t_{j+1}, \dots, t_{j+k}](\cdot-x)^{k-1}_+ -
    [t_{j}, \dots, t_{j+k-1}](\cdot-x)^{k-1}_+.
  \end{equation*}
\end{definition}

The operator $[t_j, \dots, t_{j+k}]g$ represents the \textit{k-th divided difference} of
the function $g$, which is defined as the leading coefficient of the $k+1$ order
polynomial function that agrees with $g$ at knots $t_j, \dots, t_{j+k}$. Additionally,
$(\cdot-x)_+ = \max{(\cdot-x), 0}$ represents the truncated function of $(\cdot-x)$ at
$x$. While the above definition is useful for proving properties of basis splines, for
implementation, it is better to use the following representation.

\begin{theorem}
  \label{def:bsplines_recurrency}
  Let $\ve{t}$ be a non-decreasing sequence. The basis splines of order 1 are given by
  \begin{align*}
    B_{j1}(x) & = \left\lbrace\begin{array}{ll}
      1, & \text{if}~ t_j \leq x < t_{j+1}; \\
      0, & \text{otherwise}.
    \end{array}\right.
  \end{align*}
  Additionally, the basis splines of order $k > 1$ can be recursively expressed as
  \begin{equation*}
    B_{jk}(x) = \omega_{jk}(x) B_{j, k-1}(x) + (1 - \omega_{j+1, k}(x)) B_{j+1, k-1}(x),
  \end{equation*}
  where $\omega_{jk}(x) = (x - t_j)/(t_{j+k-1} - t_j)$.
\end{theorem}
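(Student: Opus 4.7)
The plan is to proceed by induction on the order $k$, using the divided-difference definition of the B-splines given in Definition \ref{def:bsplines}.

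For the base case ($k=1$), I would directly compute $B_{j,1}(x) = (t_{j+1} - t_j)[t_j, t_{j+1}](\cdot - x)^0_+$ using the elementary first-order divided difference formula $[t_j, t_{j+1}]g = (g(t_{j+1}) - g(t_j))/(t_{j+1} - t_j)$. This yields $B_{j,1}(x) = (t_{j+1} - x)^0_+ - (t_j - x)^0_+$, which evaluates to $1$ precisely when $t_j \leq x < t_{j+1}$ and $0$ otherwise, matching the claimed indicator form.

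For the inductive step ($k > 1$), I would start from the second, equivalent, form in Definition \ref{def:bsplines}, writing $B_{jk}(x)$ as the difference of two order-$(k-1)$ divided differences of $(\cdot - x)^{k-1}_+$. The key manipulation is the pointwise factorization $(\cdot - x)^{k-1}_+ = (\cdot - x) \cdot (\cdot - x)^{k-2}_+$, which agrees on $\{\cdot > x\}$ and vanishes on the complement for $k \geq 2$. I would then apply the Leibniz rule for divided differences of a product to each term. Because $(\cdot - x)$ is linear in its argument, its divided differences of order $\geq 2$ vanish, so each Leibniz sum collapses to exactly two surviving contributions, producing a linear combination of four divided differences of $(\cdot - x)^{k-2}_+$.

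Collecting these four terms, I would identify two pairs. One pair of the un-multiplied divided differences combines, via the second form of Definition \ref{def:bsplines} applied at order $k-1$, into $B_{j+1,k-1}(x)$. The remaining two terms carry prefactors $-(t_j - x)$ and $(t_{j+1} - x)$; after converting them back to B-splines via the first form of Definition \ref{def:bsplines}, they become a coefficient $(x - t_j)/(t_{j+k-1} - t_j) = \omega_{jk}(x)$ multiplying $B_{j,k-1}(x)$, and a coefficient $(t_{j+1} - x)/(t_{j+k} - t_{j+1})$ multiplying $B_{j+1,k-1}(x)$. Adding this last contribution to the $B_{j+1,k-1}$ already obtained gives total coefficient $(t_{j+k} - x)/(t_{j+k} - t_{j+1}) = 1 - \omega_{j+1,k}(x)$, which completes the recursion. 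The main obstacle is the bookkeeping, namely tracking which subset of knots appears in each surviving divided difference after the two Leibniz expansions and verifying that the two $B_{j+1,k-1}$ pieces fuse exactly into $1 - \omega_{j+1,k}(x)$; once this is set up carefully, the remaining algebra is routine.
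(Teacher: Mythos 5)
Your argument is correct and is essentially the same route the paper takes: the order-one case computed directly from Definition \ref{def:bsplines}, and the recursion obtained from the divided-difference properties (the Leibniz rule applied to the factorization $(\cdot-x)^{k-1}_+ = (\cdot-x)(\cdot-x)^{k-2}_+$), which is exactly the de Boor (2001) derivation the paper's one-line proof cites. You simply supply the bookkeeping details that the paper delegates to that reference, and your identification of the surviving terms and of the coefficients $\omega_{jk}(x)$ and $1-\omega_{j+1,k}(x)$ checks out.
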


The basis spline of order one is directly derived from Definition \ref{def:bsplines}, and
the recursive relationship is established by applying the properties of the $k$-th divided
difference \citep{boor2001practical}.

\begin{definition}
  \label{def:spline_function}
  A \textit{spline function} of order $k$ belonging to the space $\Pi_{<k, \ve{\xi}, \ve{\nu}}$ and
  corresponding knot sequence $\ve{t}$ is defined as a linear combination of basis splines,
  \begin{equation*}
    f(x) = \sum_j\alpha_jB_{jk}(x), ~~ \alpha_j \in \real,
  \end{equation*}
  where $B_{jk}(x)$ is build with respect to the knot sequence $\ve{t}$.
\end{definition}

Out interest in to use spline functions to represent latent processes using stochastic
weights $\{\alpha_j\}$. In the following sections, we analyse the differentiation and
integration of \textit{basis splines} and \textit{spline functions}.

\subsection{Differentiation}

\begin{theorem}
  \label{the:bs_derivative}
  Let $B_{jk}(x)$ be a basis spline of order $k$ starting at knot $t_j$. Then, the derivative can
  be expressed with respect to the basis splines of order $k-1$ as follows:
  \begin{equation*}
    DB_{jk}(x) = (k-1)\left(\frac{B_{j,k-1}(x)}{t_{j+k-1} - t_j} -
    \frac{B_{j+1,k-1}(x)}{t_{j+k} - t_{j+1}}\right).
  \end{equation*}
\end{theorem}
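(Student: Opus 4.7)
The plan is to exploit the second form in Definition \ref{def:bsplines}, which writes $B_{j,k}(x)$ as a difference of two divided differences of the truncated power $(\cdot - x)_+^{k-1}$, and then differentiate with respect to $x$. Because the divided difference operator $[t_j, \dots, t_{j+k}]$ acts only on the first argument of the truncated power (it interpolates in $y$), it commutes with differentiation in $x$ as a linear functional, so the problem reduces to differentiating $(y-x)_+^{k-1}$ in $x$ and then re-expressing the resulting divided differences in terms of lower-order basis splines.

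First I would verify that $\frac{d}{dx}(y-x)_+^{k-1} = -(k-1)(y-x)_+^{k-2}$ for $k \geq 2$. Away from $y=x$ this is the ordinary chain rule; at $y=x$ both sides vanish, and for $k \geq 2$ the truncated power is $C^{k-2}$ there, so the derivative is continuous. Second, applying this identity inside each of the two divided differences appearing in the second form of Definition \ref{def:bsplines} yields
\[
DB_{j,k}(x) = -(k-1)\Bigl([t_{j+1}, \dots, t_{j+k}](\cdot - x)_+^{k-2} - [t_j, \dots, t_{j+k-1}](\cdot - x)_+^{k-2}\Bigr).
\]

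Finally I would rewrite each of these divided differences using the first form in Definition \ref{def:bsplines} at order $k-1$, which gives
\[
[t_j, \dots, t_{j+k-1}](\cdot-x)_+^{k-2} = \frac{B_{j,k-1}(x)}{t_{j+k-1} - t_j}, \qquad [t_{j+1}, \dots, t_{j+k}](\cdot-x)_+^{k-2} = \frac{B_{j+1,k-1}(x)}{t_{j+k} - t_{j+1}}.
\]
Substituting these two identities and collecting the overall sign reproduces the claimed formula.

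The main obstacle I anticipate is the boundary case $k=2$, where $(\cdot-x)_+^{k-2} = (\cdot-x)_+^{0}$ is only piecewise constant with a jump at $y=x$, so the proposed interchange of differentiation and the divided-difference functional requires some justification. One remedy is to treat the derivative identity distributionally, interpreting the limiting value at $y=x$ in the sense needed for the divided difference; alternatively, one can verify the formula directly on the piecewise-linear $B_{j,2}$ by explicit computation using the base case $B_{j,1}(x) = \mathbf{1}_{[t_j, t_{j+1})}(x)$ from Theorem \ref{def:bsplines_recurrency}. Either route confirms that the formula holds uniformly in $k \geq 2$.
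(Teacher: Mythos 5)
Your proposal follows essentially the same route as the paper's proof: differentiate both divided-difference terms in the expanded form of Definition \ref{def:bsplines} using $D(\cdot-x)^{k-1}_+ = -(k-1)(\cdot-x)^{k-2}_+$, then identify the resulting $(k-1)$-st divided differences as $B_{j,k-1}(x)/(t_{j+k-1}-t_j)$ and $B_{j+1,k-1}(x)/(t_{j+k}-t_{j+1})$. Your extra attention to the $k=2$ boundary case is a refinement the paper's proof passes over silently, but it does not change the argument.
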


\begin{proof}
  Consider the expanded version of Definition \ref{def:bsplines} for basis splines and
  apply the derivative to both terms:
  \begin{align*}
    DB_{jk}(x) & = [t_{j+1}, \dots, t_{j+k}]D(\cdot-x)^{k-1}_+ -
    [t_{j}, \dots, t_{j+k-1}]D(\cdot-x)^{k-1}_+ \\
    & = -(k-1)[t_{j+1}, \dots, t_{j+k}](\cdot-x)^{k-2}_+ +
    (k-1)[t_{j}, \dots, t_{j+k-1}](\cdot-x)^{k-2}_+.
  \end{align*}
  Notice that, using Definition \ref{def:bsplines}, we can express the $k-1$ divided
  difference as basis splines of order $k-1$, i.e., $[t_{j+1}, \dots,
  t_{j+k}](\cdot-x)^{k-2}_+ = B_{j+1,k-1}(x)/(t_j+k - t_{j+1})$ and similarly for $[t_{j},
  \dots, t_{j+k-1}](\cdot-x)^{k-2}_+ = B_{j,k-1}(x) / (t_{j+k-1} - t_j)$.
\end{proof}

\begin{corollary}
  Let $f(x)$ be a spline function of order $k$ and corresponding knot sequence $\ve{t}$
  with coefficients $\alpha_j \neq 0$ when $r \leq j \leq s$, then the derivative of
  $f(x)$ is
  \begin{align*}
    D\left(\sum_{j=r}^s \alpha_j B_{jk}(x)\right) & = \sum_{j=r}^{s+1} (k-1)
    \left(\frac{\alpha_j - \alpha_{j-1}}{t_{j+k-1} - t_j}\right)
    B_{j,k-1}(x).
  \end{align*}
\end{corollary}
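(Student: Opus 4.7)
The plan is to reduce the corollary to a direct application of Theorem \ref{the:bs_derivative} together with a careful index shift. First I would invoke the linearity of the differentiation operator $D$ to pass it through the finite sum, writing
\[
  D\!\left(\sum_{j=r}^s \alpha_j B_{jk}(x)\right) = \sum_{j=r}^s \alpha_j\, DB_{jk}(x),
\]
and then substitute the identity from Theorem \ref{the:bs_derivative}, yielding the two-term expression
\[
  (k-1)\sum_{j=r}^s \alpha_j \frac{B_{j,k-1}(x)}{t_{j+k-1}-t_j}
  -(k-1)\sum_{j=r}^s \alpha_j \frac{B_{j+1,k-1}(x)}{t_{j+k}-t_{j+1}}.
\]

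Next I would reindex the second sum by substituting $m = j+1$, so that it becomes
\[
  (k-1)\sum_{m=r+1}^{s+1} \alpha_{m-1}\, \frac{B_{m,k-1}(x)}{t_{m+k-1}-t_{m}},
\]
which now involves the same basis spline $B_{m,k-1}(x)$ and the same denominator $t_{m+k-1}-t_m$ as the first sum. Renaming $m$ back to $j$ and merging ranges by extending the first sum to include $j = s+1$ and the second sum to include $j = r$, I would adopt the convention $\alpha_{r-1} = \alpha_{s+1} = 0$, which is justified because the hypothesis states $\alpha_j \neq 0$ only for $r \leq j \leq s$ (hence the remaining $\alpha_j$ may be taken to vanish). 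Combining term by term then gives exactly the telescoping expression claimed.

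I do not anticipate any serious obstacle: the only subtle point is the boundary bookkeeping after the index shift, namely verifying that the $j=r$ and $j=s+1$ terms are correctly absorbed into the combined sum via the zero-coefficient convention. Once that is handled, the proof is a two-line computation.
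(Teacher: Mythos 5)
Your argument is correct and coincides with the paper's own proof: both pass $D$ through the sum, apply Theorem \ref{the:bs_derivative}, shift the index in the second sum so that it involves $B_{j,k-1}(x)/(t_{j+k-1}-t_j)$, and merge the two sums using the convention $\alpha_{r-1}=\alpha_{s+1}=0$. No gaps; the boundary bookkeeping you flag is exactly the point the paper also handles with that convention.
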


\begin{proof}
  \begin{align*}
    D\left(\sum_{j=r}^s \alpha_j B_{jk}(x)\right) & =
    \sum_{j=r}^s \alpha_j DB_{jk}(x) \\
    & = \sum_{j=r}^s \alpha_j (k-1)\left(\frac{B_{j,k-1}(x)}{t_{j+k-1} - t_j} -
    \frac{B_{j+1,k-1}(x)}{t_{j+k} - t_{j+1}}\right) \tag{by Theorem \ref{the:bs_derivative}}\\
    & = \sum_{j=r}^s \alpha_j (k-1)\left(\frac{B_{j,k-1}(x)}{t_{j+k-1} - t_j}\right) -
    \sum_{j=r+1}^{s+1} \alpha_{j-1} (k-1)
    \left(\frac{B_{j,k-1}(x)}{t_{j+k-1} - t_{j}}\right) \\
    & = \sum_{j=r}^{s+1} (k-1) \left(
    \frac{\alpha_j - \alpha_{j-1}}{t_{j+k-1} - t_j}
    \right) B_{j,k-1}(x).
  \end{align*}
  The last equality holds because $\alpha_{r-1} = 0$ and $\alpha_{s+1} = 0$.
\end{proof}

\subsection{Integration}

\begin{theorem}
  \label{the:bs_integral}
  Let $B_{jk}(t)$ be a basis spline of order $k$ starting at knot $t_j$. Then, the
  integral $\int_{t_j}^x B_{j,k}(t)dt$ for $t_j \leq x \leq t_{j+k}$ can be expressed with
  respect to the basis splines of order $k+1$ as follows:
  \begin{align*}
    \int_{t_j}^x B_{j,k}(t)dt & = \frac{t_{j+k} - t_j}{k} \sum_{i = j}^{s-1} B_{i,k+1}(x),
  \end{align*}
  where $s: t_{s-1} < x < t_s$.
\end{theorem}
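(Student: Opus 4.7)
The plan is to exhibit an explicit antiderivative of $B_{j,k}$ expressed as a weighted sum of basis splines of order $k+1$, and then invoke the fundamental theorem of calculus. Concretely, I would propose the candidate
\[
F(x) \;=\; \frac{t_{j+k}-t_j}{k}\sum_{i\geq j} B_{i,k+1}(x),
\]
which is well-defined because the local support property of basis splines makes the sum finite for any fixed $x$.

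The first main step is to show $F'(x) = B_{j,k}(x)$. Differentiating term-by-term and substituting the derivative identity from Theorem~\ref{the:bs_derivative} (applied with order $k+1$) yields
\[
F'(x) \;=\; (t_{j+k}-t_j)\sum_{i\geq j}\left(\frac{B_{i,k}(x)}{t_{i+k}-t_i} - \frac{B_{i+1,k}(x)}{t_{i+k+1}-t_{i+1}}\right).
\]
The negative term for index $i$ cancels with the positive term for index $i+1$, so the sum telescopes, leaving only the first positive term $B_{j,k}(x)/(t_{j+k}-t_j)$, and multiplying by $(t_{j+k}-t_j)$ recovers $B_{j,k}(x)$ exactly. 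I would note that the telescoping is clean because, for any fixed $x$, only finitely many terms are nonzero (those with $t_i \leq x < t_{i+k+1}$), so there are no convergence subtleties.

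The second step is to evaluate $F$ at the lower limit. Since we assume a strictly increasing knot sequence and $k\geq 1$, each $B_{i,k+1}(t_j)$ vanishes: for $i=j$ because a basis spline of order at least $2$ is zero at its left knot, and for $i>j$ because $t_j$ lies strictly below the support $[t_i,t_{i+k+1}]$. Thus $F(t_j)=0$, and the fundamental theorem of calculus gives $\int_{t_j}^x B_{j,k}(t)\,dt = F(x)$.

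Finally, I would truncate the sum: for $x$ with $t_{s-1}<x<t_s$, the basis spline $B_{i,k+1}(x)$ vanishes whenever $i \geq s$ (since then $t_i \geq t_s > x$), so $\sum_{i\geq j}B_{i,k+1}(x) = \sum_{i=j}^{s-1}B_{i,k+1}(x)$, yielding the stated formula. The main obstacle I expect is keeping the bookkeeping around the telescoping and the support of $B_{i,k+1}$ tight — in particular, verifying the cancellation at the two ends of the (effectively finite) index range and handling the edge case in which $j > s-1$ (where both sides are zero). Once the antiderivative identity is established, the rest follows essentially by substitution.
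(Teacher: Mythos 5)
Your proof is correct, and it rests on exactly the same two ingredients as the paper's argument: the derivative recurrence of Theorem \ref{the:bs_derivative} (applied at order $k+1$) and the local support of B-splines, which both truncates the sum at $i=s-1$ and kills the boundary terms. The difference is one of direction. The paper integrates the identity for $DB_{j,k+1}$ over $[t_j,x]$, isolates $\int_{t_j}^x B_{j,k}$, and then unrolls the resulting recursion in $j$ (the telescoping is hidden in the repeated substitution of $\int_{t_{j+1}}^x B_{j+1,k}$, $\int_{t_{j+2}}^x B_{j+2,k}$, and so on, until the tail integrals vanish), whereas you posit the closed-form sum $F(x)=\frac{t_{j+k}-t_j}{k}\sum_{i\ge j}B_{i,k+1}(x)$ as an antiderivative, verify $F'=B_{j,k}$ by termwise differentiation and an explicit telescope, check $F(t_j)=0$, and invoke the fundamental theorem of calculus. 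Your verification route avoids the paper's somewhat informal ``$\sum_{i=j}^{\infty}$'' unrolling step and makes the finiteness and boundary cancellations explicit (including the degenerate case $j>s-1$), at the cost of having to know the formula in advance and of justifying termwise differentiation of a locally finite sum (and, strictly, the piecewise nature of the derivative identity at the knots when $k=1$ --- a point the paper glosses over in the same way). Either way the mathematical content is the same, and your argument is sound.
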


\begin{proof}
  Given that the basis spline $B_{jk}(x)$ is non-zero from $t_j$ to $t_{j+k}$, then we can
  evaluate the integral from $t_j$ to an arbitrary value $t_j \leq x \leq t_{j+k}$. Using
  Theorem \ref{the:bs_derivative} to the basis function $B_{j,k+1}(x)$, we obtain that
\begin{align*}
  DB_{j,k+1}(x) & = k\left(\frac{B_{j,k}(x)}{t_{j+k} - t_j} -
  \frac{B_{j+1,k}(x)}{t_{j+k+1} - t_{j+1}}\right), \\
  \int_{t_j}^x DB_{j,k+1}(t) dt & = \frac{k}{t_{j+k} - t_j} \int_{t_j}^x B_{j,k}(t)dt-
  \frac{k}{t_{j+k+1} - t_{j+1}} \int_{t_j}^x B_{j+1,k}(t)dt, \\
  \frac{k}{t_{j+k} - t_j} \int_{t_j}^x B_{j,k}(t)dt & = B_{j,k+1}(x) +
  \frac{k}{t_{j+k+1} - t_{j+1}} \int_{t_{j+1}}^x B_{j+1,k}(t)dt, \\
  \int_{t_j}^x B_{j,k}(t)dt & = \frac{t_{j+k} - t_j}{k} B_{j,k+1}(x) +
  \frac{t_{j+k} - t_j}{t_{j+k+1} - t_{j+1}} \int_{t_{j+1}}^x B_{j+1,k}(t)dt.
\end{align*}
Replacing $\int_{t_{j+1}}^x B_{j+1,k}(t)dt$ using the above equivalence, then
\begin{align*}
  \int_{t_j}^x B_{j,k}(t)dt & = \frac{t_{j+k} - t_j}{k} B_{j,k+1}(x) + \\
  & \quad \frac{t_{j+k} - t_j}{t_{j+k+1} - t_{j+1}} \left(
  \frac{t_{j+k+1}-t_{j+1}}{k}B_{j+1,k+1}(x) +
  \frac{t_{j+k+1} - t_{j+1}}{t_{j+k+2} - t_{j+2}} \int_{t_{j+2}}^x B_{j+2,k}(t)dt
  \right), \\
  & = \frac{t_{j+k} - t_j}{k} B_{j,k+1}(x) + \frac{t_{j+k} - t_j}{k} B_{j+1,k+1}(x) +
  \frac{t_{j+k} - t_j}{t_{j+k+2} - t_{j+2}} \int_{t_{j+2}}^x B_{j+2,k}(t)dt, \\
  & = \frac{t_{j+k} - t_j}{k} \sum_{i = j}^{\infty} B_{i,k+1}(x)
  = \frac{t_{j+k} - t_j}{k} \sum_{i = j}^{s-1} B_{i,k+1}(x).
\end{align*}
For $s$ such as $t_{s-1} < x \leq t_s$. The previous result holds because $B_{i,k+1}(x)
= 0$ for $i \geq s$.
\end{proof}

\begin{corollary}
  \label{the:bs_integral2}
  Let $B_{jk}(t)$ be a basis spline of order $k$ starting at knot $t_j$. Then, the
  integral $\int_{t_i}^x B_{j,k}(t)dt$ for $j \leq i$ and $t_i < x \leq t_{j+k}$ can be expressed with
  respect to the basis splines of order $k+1$ as follows:
  \begin{align*}
    \int_{t_i}^x B_{j,k}(t)dt & = \frac{t_{j+k} - t_j}{k}
    \left(\sum_{r=0}^{s-1}B_{j+r,k+1}(x) - \sum_{r=0}^{i-j-1}B_{j+r,k+1}(t_i)\right),
  \end{align*}
  where $s: t_{s-1} < x < t_s$.
\end{corollary}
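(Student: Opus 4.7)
The plan is to derive Corollary \ref{the:bs_integral2} from Theorem \ref{the:bs_integral} by exploiting the additivity of the integral, namely
\[
\int_{t_i}^{x} B_{j,k}(t)\,dt = \int_{t_j}^{x} B_{j,k}(t)\,dt - \int_{t_j}^{t_i} B_{j,k}(t)\,dt,
\]
which is valid because $B_{j,k}$ is a locally integrable piecewise polynomial and because $t_j \leq t_i < x \leq t_{j+k}$ places both integrals inside the support of $B_{j,k}$. The strategy is to apply Theorem \ref{the:bs_integral} separately to each of the two pieces and then reindex.

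First I would apply Theorem \ref{the:bs_integral} to $\int_{t_j}^{x} B_{j,k}(t)\,dt$. Since by hypothesis $s$ satisfies $t_{s-1} < x \leq t_{s}$, the theorem directly gives
\[
\int_{t_j}^{x} B_{j,k}(t)\,dt = \frac{t_{j+k}-t_j}{k}\sum_{r=0}^{s-1} B_{j+r,k+1}(x),
\]
after rewriting the sum index $r = i - j$ to match the form in the target statement. Next I would apply the same theorem to $\int_{t_j}^{t_i} B_{j,k}(t)\,dt$, noting that the upper limit $t_i$ sits at a knot, so the relevant index $s'$ in the theorem is determined by $t_{s'-1} < t_i \leq t_{s'}$. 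Because $\ve{t}$ is non-decreasing and we are dealing with a strictly increasing break sequence (as specified in the definition at the top of the appendix), this gives $s' = i$, so
\[
\int_{t_j}^{t_i} B_{j,k}(t)\,dt = \frac{t_{j+k}-t_j}{k}\sum_{r=0}^{i-j-1} B_{j+r,k+1}(t_i).
\]
Subtracting the two expressions yields exactly the claimed identity.

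The only subtlety, and the main thing to handle carefully, is justifying $s' = i$ in the second application: the theorem is stated with a strict inequality on the left side of the interval containing $x$, but here the upper limit coincides with a knot $t_i$, so one has to check that plugging $t_i$ into Theorem \ref{the:bs_integral} indeed produces a sum running from $r=0$ up to $r = i-j-1$. This follows from the local support property of B-splines ($B_{i,k+1}(t_i) = 0$ for the boundary term when the knot sequence is strictly increasing), together with right-continuity of the basis splines, so no extra terms appear. Everything else is just bookkeeping of summation indices and arithmetic with the factor $(t_{j+k}-t_j)/k$.
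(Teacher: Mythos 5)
Your proposal is correct and follows essentially the same route as the paper's own proof, which likewise splits $\int_{t_i}^x = \int_{t_j}^x - \int_{t_j}^{t_i}$ and applies Theorem \ref{the:bs_integral} to both pieces. Your additional care in checking that the upper limit $t_i$ of the second integral gives the index $s'=i$ (so the subtracted sum runs to $r=i-j-1$) is a detail the paper leaves implicit, but it does not change the argument.
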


\begin{proof}
  This result holds by separating the integral in two parts
\begin{equation*}
  \int_{t_i}^x B_{j,k}(t)dt = \int_{t_j}^x B_{j,k}(t)dt - \int_{t_j}^{t_i} B_{j,k}(t)dt \\
\end{equation*}
and applying Theorem \ref{the:bs_integral} for both sides.
\end{proof}

\begin{corollary}
  Let $f(x)$ be a spline function of order $k$ and corresponding knot sequence $\ve{t}$
  with coefficients $\alpha_j \neq 0$ when $r \leq j \leq n$, then the integral of
  $f(x)$ is
  \begin{align*}
    \int_{t_r}^{x}\left(\sum_{i=r}^n \alpha_i B_{ik}(t)\right)dt
    & = \sum_{i=r}^{s-1} \left(\sum_{j=r}^i\alpha_j(t_{j+k}-t_j)/k\right) B_{i,k+1}(x),
  \end{align*}
  where $s: t_{s-1} < x < t_s$.
\end{corollary}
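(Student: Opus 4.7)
The plan is to reduce the spline-function integral to the single-basis-spline formula of Theorem \ref{the:bs_integral} by linearity, and then swap the order of a double sum. First I would pull the linear combination outside the integral:
\begin{equation*}
  \int_{t_r}^{x} \sum_{i=r}^n \alpha_i B_{ik}(t)\, dt
  = \sum_{i=r}^n \alpha_i \int_{t_r}^{x} B_{ik}(t)\, dt .
\end{equation*}
For each $i \geq r$, since $B_{ik}$ vanishes on $(-\infty, t_i)$ and $t_i \geq t_r$, the lower limit can be raised to $t_i$ without changing the value, allowing a direct application of Theorem \ref{the:bs_integral}:
\begin{equation*}
  \int_{t_r}^{x} B_{ik}(t)\, dt = \frac{t_{i+k} - t_i}{k} \sum_{l=i}^{s-1} B_{l, k+1}(x),
\end{equation*}
with the inner sum understood to be empty whenever $i \geq s$, which correctly captures the case $x \leq t_i$ where the integral is zero.

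Substituting this back and interchanging the two summations, letting $l$ range over $r, \dots, s-1$ and, for fixed $l$, letting $i$ run from $r$ to $l$, yields
\begin{equation*}
  \sum_{l=r}^{s-1} \left( \sum_{i=r}^{l} \alpha_i \frac{t_{i+k}-t_i}{k} \right) B_{l, k+1}(x),
\end{equation*}
which matches the stated formula after the cosmetic relabelling $l \mapsto i$ and $i \mapsto j$.

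The only real obstacle is bookkeeping around the cut-off $s$: one must check that the terms with $i \geq s$ in the outer sum contribute nothing, so that the truncation performed before the swap is legitimate. For indices $i$ with $t_{i+k} < x$, the second branch of the single-spline integral (essentially Corollary \ref{the:bs_integral2}) would nominally apply, but both branches collapse to the same cumulative-coefficient expression once one uses the convention that $B_{l, k+1}(x) = 0$ outside its support, so no explicit case split is required in the final argument.
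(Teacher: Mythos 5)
Your proof is correct and takes essentially the same route as the paper's: linearity, raising the lower limit from $t_r$ to $t_i$ using the support of $B_{ik}$, applying Theorem \ref{the:bs_integral}, and interchanging the double sum, with the only difference being that you truncate the outer sum at $s-1$ after applying the theorem (via the empty-sum convention) rather than before, which is immaterial. Your closing remark about indices with $t_{i+k}<x$ addresses a detail the paper's proof passes over silently, and your resolution of it is sound.
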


\begin{proof}
  Noticing that all basis splines $B_{ik}(t)$ for $i > s$, where $s: t_{s-1} < x < t_s$,
  are zero in the interval $[t_r, x]$, then
  \begin{align*}
    \int_{t_r}^{x}\left(\sum_{i=r}^n \alpha_i B_{ik}(t)\right)dt
    & = \int_{t_r}^{x}\left(\sum_{i=r}^{s-1} \alpha_i B_{ik}(t)\right)dt \\
    & = \sum_{i=r}^{s-1}\alpha_i \int_{t_r}^{x}B_{ik}(t)dt \\
    & = \sum_{i=r}^{s-1}\alpha_i \int_{t_i}^{x}B_{ik}(t)dt \\
    & = \sum_{i=r}^{s-1} \alpha_i \frac{t_{i+k} - t_i}{k} \sum_{j = i}^{s-1} B_{j,k+1}(x)
    \tag{Using Theorem \ref{the:bs_integral}} \\
    & = \sum_{i=r}^{s-1} \sum_{j=i}^{s-1}
    \left(\frac{\alpha_i(t_{i+k} - t_i)}{k}\right) B_{j,k+1}(x) \\
    & = \sum_{j=r}^{s-1} \sum_{i=r}^{j}
    \left(\frac{\alpha_i(t_{i+k} - t_i)}{k}\right) B_{j,k+1}(x) \\
    & = \sum_{j=r}^{s-1} \left( \sum_{i=r}^{j}
    \frac{\alpha_i(t_{i+k} - t_i)}{k}\right) B_{j,k+1}(x) \\
    & = \sum_{i=r}^{s-1} \left(\sum_{j=r}^i\alpha_j(t_{j+k}-t_j)/k\right) B_{i,k+1}(x).
  \end{align*}
\end{proof}

\section{Spatial latent Gaussian models}
\label{sec:app-slgm}

\subsection{Spatial Stochastic Processes}%
\label{sub:slgm-spp}

Spatial stochastic processes are essential for the
definition of SLGMs. In this section, we provide more information about Gaussian
processes, Gaussian Markov random fields, and linear transformations of basis functions
which are essential stochastic processes for spatial modelling.

\subsubsection{Gaussian Processes}%
\label{ssub:slgm-ssp-gp}

A spatial Gaussian process (GP) is a stochastic process $\{\ssp(\locs): \locs \in
\mathcal{D} \subset \mathbb{R}^2\}$ with a continuous index set. It is characterised by
the property that any finite subset of random variables follows a multivariate Gaussian
distribution \citep{rasmussen2005gaussian,diggle2007modelbased}. The process
$\gp(\mu(\locs), \kappa(\locs, \locs^*))$ is defined by a mean function $\mu(\locs)$ and a
covariance function, also known as a kernel function, denoted as $\cov{\ssp(\locs)}{\ssp(\locs^*)}
= \kappa(\locs, \locs^*)$ for any pair of locations $\locs,\locs^* \in \mathcal{D}$.
Gaussian processes are widely utilised for modelling spatial data due to their ability to model dependencies
using kernel functions, facilitate statistical inference (both classical and Bayesian),
and provide predictions with uncertainty quantification which is crucial for
decision-making.

\subsubsection{Gaussian Markov Random Fields}%
\label{ssub:slgm-spp-gmrf}

A spatial Gaussian Markov random field (GMRF) is a stochastic process $\{\ssp(\locc):
\locc \in \loccset\}$ with a discrete index set $\loccset = \{\locc_1, \dots, \locc_n\}$
where each $\locc_i$ represents a non-overlapping region. Let the connectivity of the index set
$\loccset$ be defined by the undirected graph $\df{G} = (\df{V}, \df{E})$ with nodes
$\mathcal{V}$ and edges $\df{E} = \{\{i,j\}; i, j \in \df{V}\}$. Then, the random vector
$\bm{\ssp}_c = [\ssp(\locc_1), \dots, \ssp(\locc_n)]^\tr$ follows a multivariate Gaussian
distribution with mean $\ve{\mu}$ and precision matrix $\m{Q} > 0$ such as
\begin{align}
  \label{eq:gmrf_prec_elem}
  Q_{ij} \neq 0 \Longleftrightarrow \{i, j\} \in \df{E} ~ \text{for all} ~ i \neq j.
\end{align}
In cases where $\m{Q}$ is a semidefinite matrix with rank $(n-k)$, the process is known as an
intrinsic Gaussian Markov random field (IGMRF) with density
\begin{align}
  \label{eq:igmrf_density}
  f(\ve{x}) = (2\pi)^{-(n-k)/2}\abs{\m{Q}^*}^{1/2}\exp
  \left(-\frac{1}{2}(\ve{x}-\ve{\mu})^\tr\m{Q}(\ve{x}-\ve{\mu})\right),
\end{align}
where $\abs{\cdot}^*$ denotes the generalised determinant \citep{rue2005gaussian}. The
precision matrix can usually be decomposed as the product of a scalar precision parameter
$\kappa$ and a structure matrix $\m{P}$.

Gaussian Markov random fields are used because they capture the spatial structure based on
the connectivity of the index elements, resulting in a sparse precision matrix $\m{Q}$.
This sparsity leads to lower computational costs compared to Gaussian processes. The
applicability of GMRF is limited to discrete index sets. However, they can also be used to
represent Gaussian processes through their connection with stochastic partial differential
equations \citep{lindgren2011explicit}.

\subsubsection{Linear Combinations of Spatial Basis Functions}%
\label{ssub:slgm-spp-bf}

Another type of continuous-index spatial stochastic process $\{\ssp(\locs): \locs \in
\mathcal{D} \subset \mathbb{R}^2\}$ can be obtained as a linear combination of $q$ spatial
basis functions $b_i(\ve{s})$ such as $\rv{\ssp}(\locs) = \sum_{i=1}^{q} \delta_{i}
b_{i}(\locs).$ Where $\delta_i$, for $i=1,\dots,q$, are the weights associated to the
basis functions. In particular, we focus on the case where the basis functions
$b_i(\locs)$ are locally defined and the spatial structure is introduced by defining
$\{\delta_i\}$ as a GMRF with a two-dimensional regular grid as an index set.
More
specifically, consider the regular grid with $q_1$ and $q_2$ knots corresponding to
\textit{basis splines} $\{B^1_{jk}(\cdot): j=1,\dots,q_1\}$ and $\{B^2_{lk}(\cdot):
l=1,\dots,q_2\}$ of order $k$ for the first and second coordinates, respectively. Then, the
spatial process can be written as
\begin{align}
  \label{eq:bf}
  \rv{\ssp}(\ve{s}) =
  \sum_{j=1}^{q_1}\sum_{l=1}^{q_2} \delta_{jl} B_{jk}^1(s_1) B_{lk}^2(s_2)= \ve{b}^\tr(\locs)\rve{\delta},
\end{align}
where $\rve{\delta}$ is a $q$-dimensional GMRF and $\ve{b}(\locs)$ is a vector containing
the two-dimensional basis functions $b_{jl}(\ve{s}) = B^1_{jk}(s_{1}) B^2_{lk}(s_{2})$
expressed as the tensor product of two sets of one-dimensional basis splines of order $k$.
The advantage of using locally defined basis functions and GMRF is that
$\{\rv{\ssp}(\ve{s})\}$ has also sparse properties leading to lower computational costs.

\section{Spatial latent Gaussian model with change of support}%
\label{sec:app-slgm-cos}

\subsection{Change of Support on Stochastic Processes}%
\label{sub:app-cos-sp}

\subsubsection{Gaussian Processes}%
\label{ssub:app-cos-sp-gp}

\paragraph{Inference:}

Using the properties of a GP for change of support, we can express the likelihood function
for observations with sampling units as points and/or regions to estimate the parameters
$\ve{\theta}$ defined in the mean function $\mu(\locs)$ and the covariance function
$\kappa(\locs,\locs^*)$. However, it is important to acknowledge that in the most commonly
used models, the elements
of the covariance matrix
cannot be obtained analytically. Therefore, approximation methods,
with Monte Carlo integration being the most common, are employed to estimate these
quantities at each iteration of the algorithm to obtain the estimates or to obtain samples
from the posterior distribution.

\paragraph{Prediction:}

Let us consider a setting where we desire predictions at new points and/or regions
$\mathcal{A}^* = \{\locs_1^*, \dots, \locs_{m_\locs}^*, \locc_1^*, \dots,
\locc_{m_\locc}^*\}$. Predictions at these new locations, given a set of observations
at points and/or regions $(\rve{\ssp}_{a^*} \mid \rve{\ssp}_a = \ve{w})$, follow a
multivariate normal distribution with mean vector and covariance matrix
\begin{align*}
  \ve{\mu}_{a^*\mid a} & = \ve{\mu}_{a^*} + \m{\Sigma}_{a^*a}^\tr \m{\Sigma}^{-1}_a
  (\ve{w} - \ve{\mu}_a), \\
  \m{\Sigma}_{a^*\mid a} & = \m{\Sigma}_{a^*} -
  \m{\Sigma}_{aa^*}^\tr \m{\Sigma}^{-1}_a \m{\Sigma}_{aa^*}.
\end{align*}
The covariance matrices $\m{\Sigma}_a$ and $\m{\Sigma}_{a^*}$ are defined as usual,
while the cross-covariance matrix is defined as
\begin{equation}
  \m{\Sigma}_{aa*} =
  \left[\begin{array}{c|c}
    \m{\Sigma}_{\locs\locs^*}  & \m{\Sigma}_{\locs\locc^*} \\ \hline
    \m{\Sigma}_{\locc\locs^*}  & \m{\Sigma}_{\locc\locc^*}
  \end{array}\right],
\end{equation}
which depends of the covariance under change of support.
Therefore, the
predictive distribution at any set of points and/or regions can be derived. However, as in
the inference case, approximation methods are required to compute the mean and covariance
of the predictive distribution with Monte Carlo integration being the most common.

\subsubsection{Linear Combinations of Spatial Basis Functions}%
\label{ssub:app-cos-sp-bf}

\paragraph{Inference:}

Consider a set of point-level $\ve{y}_{\mathcal{C}}$ and region-level
$\ve{y}_{\mathcal{S}}$ observations coming from the following random vectors
$\rve{Y}_\mathcal{S} \mid \rve{\ssp}_\mathcal{S} \sim \df{N}(\rve{\ssp}_\mathcal{S},
\sigma^2_\mathcal{S}\m{I})$ and $\rve{Y}_\mathcal{C} \mid \rve{\ssp}_\mathcal{C} \sim
\df{N}(\rve{\ssp}_\mathcal{C}, \sigma^2_\mathcal{C}\m{I})$. Due to the connection of
$\rve{\ssp}_\mathcal{S}$ and $\rve{\ssp}_\mathcal{C}$ through $\rve{\delta}$, the
conditional joint density of $\rve{Y} = [\rve{Y}_\mathcal{S}^\tr,
\rve{Y}_\mathcal{C}^\tr]^\tr$ is also a normal distribution $\mathcal{N}(\rve{\ssp},
\m{D})$ with mean and covariance
\begin{align}
  \rve{\ssp}  =
  \left[\begin{array}{c}
    \m{B}_\mathcal{C}\\
    \m{B}_\mathcal{S}\\
  \end{array}\right]
  \rve{\delta}, \quad \m{D} = \text{BlockDiagonal}(\sigma^2_\mathcal{S}\m{I},
\sigma^2_\mathcal{C}\m{I}).
\end{align}
 Given that $\rve{\delta}$ is also normally distributed,
the likelihood associated with the point-level and region-level observations is
$\df{N}(\m{0}, \m{B}^\tr\m{Q}^{-1}\m{B})$, where each row of the design matrix $\m{B}^\tr
= [\m{B}_\mathcal{S}^\tr, \m{B}_\mathcal{C}^\tr]^\tr$ is related to the basis functions
evaluated at point-level or region-level. Inference is feasible using maximum likelihood
or Bayesian inference.

\paragraph{Prediction:}

Assuming that we can only observe $\rve{\ssp}_\mathcal{S}$ and $\rve{\ssp}_\mathcal{C}$
through $\rve{Y}_\mathcal{S} \sim \df{N}(\rve{\ssp}_\mathcal{S},
\sigma^2_\mathcal{S}\m{I})$ and $\rve{Y}_\mathcal{C} \sim \df{N}(\rve{\ssp}_\mathcal{C},
\sigma^2_\mathcal{C}\m{I})$, respectively, then the predictive distribution at new
locations and regions $\rve{\ssp}^*$ given the realizations at point-level
$\ve{y}_\mathcal{S}$ and aggregated-level $\ve{y}_\mathcal{C}$ is
\begin{align*}
  \prd{\ve{\ssp}_\mathcal{S}^*, \ve{\ssp}_\mathcal{C}^*
  \mid \ve{y}_\mathcal{S}, \ve{y}_\mathcal{C}} =
  \int \prd{\ve{\delta} \mid \ve{y}_\mathcal{S}, \ve{y}_\mathcal{C}}
  \prd{\ve{\ssp}_\mathcal{S}^*, \ve{\ssp}_\mathcal{C}^* \mid \ve{\delta}}d\ve{\delta}.
\end{align*}
The second term inside the integral is defined by
\begin{align}
  \label{eq:bf_pred_linear_trans}
  \rve{\ssp}^* =
  \left[\begin{array}{c}
    \m{B}_\mathcal{C}^* \\
    \m{B}_\mathcal{S}^* \\
  \end{array}\right]
  \rve{\delta},
\end{align}
while the first term is the posterior distribution of $\rve{\delta}$, which is a normal
distribution with covariance matrix and mean as follows
\begin{align*}
  \m{\Sigma}_{\delta/y_s} & = \left(
  \frac{\m{B}_\mathcal{S}^\tr\m{B}_\mathcal{S}}{\sigma_\mathcal{S}^2} +
  \frac{\m{B}_\mathcal{C}^\tr\m{B}_\mathcal{C}}{\sigma_\mathcal{C}^2} +
  \kappa\m{P}
  \right)^{-1}, \\
  \ve{\mu}_{\delta/y_s} & = \m{\Sigma}_{\delta/y_s} \left(
  \frac{\m{B}_\mathcal{S}^\tr\ve{y}_\mathcal{S}}{\sigma_\mathcal{S}^2} +
  \frac{\m{B}_\mathcal{C}^\tr\ve{y}_\mathcal{C}}{\sigma_\mathcal{C}^2}
  \right).
\end{align*}
We obtain realisations of the prediction by sampling from the posterior of $\rve{\delta}$
and later applying the corresponding linear transformation with \eq
\ref{eq:bf_pred_linear_trans}.

\subsection{Bayesian inference and prediction}%
\label{sub:cos-inference}

We introduce the matrix formulation of the spatial latent Gaussian model
for binary outcomes under a change of support and subsequently, elucidate a Gibbs sampling
scheme to draw samples from the posterior distribution.

\subsubsection{Model}%
\label{ssub:Model}

Consider $\locsubset_k^{[y]}$ as the set of sampling units for the response variable
in the $k-$th data source for $k=1,\dots,K,$ and $\locsubset_j^{[x]},$ for $j=1,\dots,p,$ as the predictor set. The matrix form of the model for binary outcomes under a change of support
can expressed as follows:
\begin{align*}
  \rve{Y}_k & = I(\rve{Z}_k > 0), \quad~\text{for}~ k = 1,\dots,K, \\
  \rve{Z}_k & = b_k\ve{1}_{n_k} + \beta_0\ve{1}_{n_k} +
  \sum_{j=1}^p \beta_j\m{B}_j(\locsubset^{[y]}_{k})\ve{\delta}_{v_j} +
  \m{B}_w(\locsubset^{[y]}_{k})\ve{\delta}_{w} + \ve{\varepsilon}_{k}, \\
  \rve{X}_j & = \alpha_j \ve{1}_{m_j} + \m{B}_j(\locsubset^{[x]}_{j})\ve{\delta}_{v_j} +
  \rve{\xi}_j, \quad~\text{for}~ j = 1,\dots,p.
\end{align*}
Here $\rve{Y}_k$, $\rve{Z}_k$ and $\ve{\varepsilon}_k$ are random vectors of response
variables, auxiliary variables and error terms corresponding to the observations
$\ve{y}_k$ for the $k$-th source. Similarly $\rve{X}_j$ and $\rve{\xi}_j$ are the random
vectors of predictor variables and error terms corresponding to the observed predictor
values $\ve{x}_j$. $\m{B}_j(\locsubset^{[y]}_{k})$, $\m{B}_j(\locsubset^{[x]}_{j})$ and
$\m{B}_w(\locsubset^{[y]}_{k})$ are basis functions evaluated at specified location sets,
which are associated to the latent processes $\rve{\delta}_{v_j} \sim \text{GMRF}(\ve{0},
\kappa_{v_j}\m{P}_{v_j})$ and $\rve{\delta}_{w} \sim \text{GMRF}(\ve{0}, \kappa_w\m{P}_w)$. The
error terms are assumed to be normally distributed with diagonal covariance matrices
such as $\rve{\varepsilon}_{k}\sim \df{N}(\ve{0}, \sigma^2_{y_k}\m{D}_{y_k})$ and
$\rve{\xi}_{j}\sim \df{N}(\ve{0}, \sigma^2_{x_j}\m{D}_{x_j})$. Notice that the auxiliary
variables can also be written as $ \rve{Z}_k = \left[\begin{array}{c|c|c} \ve{1} & \m{A}_k
& \m{V}_k \end{array}\right]\ve{\beta}^* + \m{B}_w(\locsubset^{[y]}_{k})\ve{\delta}_{w} +
\ve{\varepsilon}_{k}, $ where $\m{A}_k$ is a design matrix of dummy variables $k$,
and $\m{V}_k$ is the design matrix of latent predictors for the locations of the $k$-th
source, and $\ve{\beta}^* = [\beta_0, b_1, \dots, b_{K-1}, \beta_1, \dots, \beta_p]^\tr$
the set of coefficient parameters.

The Bayesian model is fully specified by imposing a normal prior distribution for
$\ve{\beta}^*$, $\df{N}(\ve{0}, \m{\Sigma}_\beta)$. An inverse-gamma prior distribution
for the error variances, such as $\sigma^2_{y_k} \sim IG(a_{y_k},b_{y_k})$ and
$\sigma^2_{x_j} \sim IG(a_{x_j},b_{x_j})$. Finally, the scale parameters of the GMRFs are
$\kappa_{v_j} \sim G(a_{v_j},b_{v_j})$. We do not impose a prior distribution on
$\kappa_w$ because it needs to be fixed as explained in \se \ref{ssub:bernoulli-case}.


\subsubsection{Inference}
\label{ssub:inference}

The posterior distribution of the specified model in the previous section is proportional
to:
\begin{align*}
  & \prod_{k=1}^K \prd{\ve{y}_k \mid \ve{z}_k}
  \prod_{k=1}^K \prd{\ve{z}_k \mid \{\ve{\delta}_{v_j}\}, \ve{\delta}_{w}, \ve{\beta}^*, \sigma^2_{y_k}}
  \prod_{j=1}^p \prd{\ve{x}_j \mid \ve{\delta}_{v_j}, \alpha_j, \sigma^2_{x_j}}
  \prod_{j=1}^p \prd{\ve{\delta}_{v_j} \mid \kappa_{v_j}} \\
  &
  \prd{\ve{\delta}_w}
  \prd{\ve{\beta}^*}
  \prd{\{\alpha_j\}}
  \prd{\{\sigma^2_{y_k}\}}
  \prd{\{\sigma^2_{x_j}\}}
  \prd{\{\kappa_{v_j}\}}.
\end{align*}
We can sample from the posterior distribution using Gibbs sampling taking advantage of
conjugacy on the conditional posterior distributions. The posterior conditional
distributions are presented in Section \ref{sec:app-mcmc-chains} of the supplementary
material.

For instance, the conditional
posterior for the latent random vectors $\rve{Z}_k$ is a truncated multivariate normal
distribution:
\begin{align*}
  \prd{\ve{z}_k \mid \cdot} & \propto
  \prd{\ve{y}_k \mid \ve{z}_k}
  \prd{\ve{z}_k \mid \{\delta_{v_j}\}, \ve{\delta}_{w}, \ve{\beta}^*, \sigma_{y_k}^2}
  \tag{truncated normal} \\
  & = \left(\prod_{i=1}^{n_k} \ind{z_{ki} \geq 0}^{y_{ki}} \ind{z_{ki} < 0}^{1 - y_{ki}}\right)
  \df{N}(\m{V}_k^*\ve{\beta}^* + \m{B}_w(\locsubset^{[y]}_{k})\ve{\delta}_{w},
  \sigma^2_{y_k}\m{D}_{y_k}).
\end{align*}

On the other hand, the conditional posterior distributions for the latent GMRFs
$\rve{\delta}_w$ and $\rve{\delta}_{v_j}$ are multivariate normals. For the former, the
posterior uses data available from the $K$ response data sources:
\begin{align*}
  \prd{\ve{\delta}_w \mid \cdot} & \propto
    \prod_{k=1}^K \prd{\ve{z}_k \mid \{\ve{\delta}_{v_j}\}, \ve{\delta}_{w}, \ve{\beta}^*}
    \prd{\ve{\delta}_w} \tag{normal} \\
  \m{\Sigma}_{\delta_{w}\mid \cdot} & = \left(
    \sum_{k=1}^K\sigma^{-2}_{y_k}\m{B}_w(\locsubset_k^{[y]})^\tr\m{D}_{y_k}^{-1}\m{B}_w(\locsubset_k^{[y]})
    + \kappa_w\m{P}_w \right)^{-1}, \\
  \ve{\mu}_{\delta_{w}\mid \cdot} & = \m{\Sigma}_{\delta_{w}\mid \cdot}
    \left(
    \sum_{k=1}^K\sigma^{-2}_{y_k}\m{B}_w(\locsubset_k^{[y]})^\tr\m{D}_{y_k}^{-1}(\ve{z}_k - \m{V}_k^{*}\ve{\beta}^*)
    \right).
\end{align*}
The conditional posterior of $\rve{\delta}_w$ uses information from the $K$ response data
sources but also from the $j$-th predictor:
\begin{align*}
  \prd{\ve{\delta}_{v_j} \mid \cdot} & \propto
    \prod_{k=1}^K \prd{\ve{z}_k \mid \{\ve{\delta}_{v_j}\}, \ve{\delta}_{w}, \ve{\beta}^*, \sigma^2_{y_k}}
    \prd{\ve{x}_j \mid \ve{\delta}_{v_j}, \alpha_j, \sigma^2_{x_j}}
    \prd{\ve{\delta}_{v_j} \mid \kappa_{v_j}} \tag{normal} \\
  \m{\Sigma}_{\delta_{v_j}\mid \cdot} & = \left(
    \sum_{k=1}^K
    \frac{\beta_j^2\m{B}_j(\locsubset_k^{[y]})^\tr\m{D}_{y_k}^{-1}\m{B}_j(\locsubset_k^{[y]})}{\sigma_{y_k}^2} +
    \frac{\m{B}_j(\locsubset_j^{[x]})^\tr\m{D}_{x_j}^{-1}\m{B}_j(\locsubset_j^{[x]})}{\sigma_{x_j}^2} +
    \kappa_{v_j}\m{P}_{v_j}
    \right)^{-1}, \\
  \ve{\mu}_{\delta_{v_j}\mid \cdot} & =
    \m{\Sigma}_{\delta_{v_j}\mid \cdot} \Biggl(
    \sum_{k=1}^K
    \sigma_{y_k}^{-2}\beta_j\m{B}_j(\locsubset_k^{[y]})^\tr \m{D}_{y_k}^{-1}
    \left(\ve{z}_k - b_k\ve{1} - \beta_0\ve{1} - \sum_{q\neq j}
    \beta_q\m{B}_q(\locsubset_k^{[y]})\ve{\delta}_{v_q}
    - \m{B}_w(\locsubset_k^{[y]})\ve{\delta_w}\right) \Biggr. + \\
    & \quad~
    \Biggl.
    \sigma_{x_j}^{-2}\m{B}_j(\locsubset_j^{[x]})^\tr\m{D}_{x_j}^{-1}(\ve{x}_j - \alpha_j\ve{1})
    \Biggr).
\end{align*}
The conditional posterior for the regression coefficients $\rve{\beta}$ is obtained as follows:
\begin{align*}
  \prd{\ve{\beta}^* \mid \cdot} & \propto
    \prod_{k=1}^K \prd{\ve{z}_k \mid \{\ve{\delta}_{v_j}\}, \ve{\delta}_{w}, \ve{\beta}^*, \sigma^2_{y_k}}
    \prd{\ve{\beta}^*} \tag{normal} \\
  \m{\Sigma}_{\beta \mid \cdot} & =
    \left(\sum_{k=1}^K\sigma_{y_k}^{-2}\m{V}_k^{*\tr}\m{D}_{y_k}^{-1}\m{V}_k^* + \m{\Sigma}_\beta^{-1}\right)^{-1}, \\
  \m{\mu}_{\beta \mid \cdot} & = \m{\Sigma}_{\beta \mid \cdot} \left(
    \sum_{k=1}^K \sigma_{y_k}^{-2}\m{V}_k^{*\tr}\m{D}_{y_k}^{-1}(\ve{z}_k - \m{B}_w(\locsubset_k^{[y]})\ve{\delta_w})
    \right).
\end{align*}
Similarly, the conditional posterior distribution for the intercept coefficients of the
predictors are:
\begin{align*}
  \prd{\alpha_j \mid \cdot} & \propto
    \prd{\ve{x}_j \mid \ve{\delta}_{v_j}, \alpha_j, \sigma^2_{x_j}}
    \prd{\alpha_j} \tag{normal} \\
  \sigma^2_{\alpha_j \mid \cdot} & =
    \left(\sigma^{-2}_{x_j}\ve{1}^\tr\m{D}_{x_j}^{-1}\ve{1} +
    \sigma^{-2}_{\alpha_j}\right)^{-1}, \\
  \mu_{\alpha_j \mid \cdot} & = \sigma^2_{\alpha_j \mid \cdot}
    \left(\sigma^{-2}_{x_j}\ve{1}^\tr\m{D}_{x_j}^{-1}
    (\ve{x}_j - \m{B}_j(\loc_j^{[x]})\ve{\delta}_{v_j})
    \right).
\end{align*}

We also take advantage of conjugacy for the error variance parameters $\sigma^2_{y_k}$,
$\sigma^2_{x_j}$, and the scale parameters $\kappa_{v_j}$. For the response error variances
$\sigma^2_{y_k}$, the conditionals are inverse-gamma distributions:
\begin{align*}
  \prd{\sigma^2_{y_k} \mid \cdot} & \propto
  \prd{\ve{z}_k \mid \{\ve{\delta}_{v_j}\}, \ve{\delta}_{w}, \ve{\beta}^*, \sigma^2_{y_k}}
  \prd{\sigma^2_{y_k}} \tag{inverse gamma} \\
  a_{y_k}^* & = a_{y_k} + n_{k}/2, \\
  b_{y_k}^* & = b_{y_k} + \frac{1}{2}
  \left(\ve{z}_k-\m{V}_k^*\ve{\beta}^*-\m{B}_w(\locsubset_k^{[y]})\ve{\delta}_{w}\right)^\tr
  \m{D}_{y_k}^{-1}
  \left(\ve{z}_k-\m{V}_k^*\ve{\beta}^*-\m{B}_w(\locsubset_k^{[y]})\ve{\delta}_{w}\right),
\end{align*}
and similarly for the predictor error variances $\sigma^2_{x_j}$,
\begin{align*}
  \prd{\sigma^2_{x_j} \mid \cdot} & \propto
  \prd{\ve{x}_j \mid \ve{\delta}_{v_j}, \alpha_j, \sigma^2_{x_j}}
  \prd{\sigma^2_{x_j}} \tag{inverse gamma} \\
  a_{x_j}^* & = a_{x_j} + m_{j}/2, \\
  b_{x_j}^* & = b_{x_j} + \frac{1}{2}
  \left(\ve{x}_j-\alpha_j\ve{1}-\m{B}_j(\locsubset_j^{[x]})\ve{\delta}_{v_j}\right)^\tr
  \m{D}_{x_j}^{-1}
  \left(\ve{x}_j-\alpha_j\ve{1}-\m{B}_j(\locsubset_j^{[x]})\ve{\delta}_{v_j}\right).
\end{align*}
Finally, the conditional posteriors for the scale parameters $\kappa_j$ are gamma distributed:
\begin{align*}
  \prd{\kappa_{v_j} \mid \cdot} & \propto
  \prd{\ve{\delta}_{v_j} \mid \kappa_{v_j}}
  \prd{\kappa_{v_j}} \tag{gamma} \\
  a^* & = a + \frac{1}{2}\text{rank}(\m{P}_{v_j}), \\
  b^* & = b + \frac{1}{2}\ve{\delta}_{v_j}^\tr\m{P}_{v_j}\ve{\delta}_{v_j}.
\end{align*}

\subsection{Spatial prediction}%
\label{sub:cos-prediction}

Our main interest for spatial prediction is to infer the latent process
$\{\rv{\eta}(\locs)\}$ at new locations $\locsubset^*$ (point or geometries). Denoting the
associated random vector as $\rve{\eta}(\locsubset^*)$ with values
$\ve{\eta}_{\locsubset^*}$, the posterior predictive distribution is:
\begin{align*}
  \label{eq:pred}
  \prd{\rve{\eta}_{\locsubset^*} \mid \{\ve{y}_k\}, \{\ve{x}_j\}}
    & = \int \prd{\rve{\eta}_{\locsubset^*}, \ve{\beta}, \{\ve{\delta}_{v_j}\}, \ve{\delta}_{w}
    \mid \{\ve{y}_k\}, \{\ve{x}_j\}} d\ve{\beta}d\{\ve{\delta}_{v_j}\}d\ve{\delta}_{w}, \\
    & = \int \prd{\ve{\beta}, \{\ve{\delta}_{v_j}\}, \ve{\delta}_{w} \mid \{\ve{y}_k\}, \{\ve{x}_j\}}
    \prd{\rve{\eta}_{\locsubset^*} \mid \ve{\beta}, \{\ve{\delta}_{v_j}\}, \ve{\delta}_{w}}
    d\ve{\beta}d\{\ve{\delta}_{v_j}\}d\ve{\delta}_{w}.
\end{align*}
Where the first term on the right-hand side is the marginal posterior distribution for
$\ve{\beta}, \{\ve{\delta}_{v_j}\}$ and $\ve{\delta}_{w}$, while the second term is simply a
deterministic relationship: $\rve{\eta}_{\locsubset^*} = \beta_0\ve{1} +
\sum_{j=1}^p\beta_j\m{B}_j(\locsubset^*)\ve{\delta}_{v_j} +
\m{B}_w(\locsubset^*)\ve{\delta}_w$. Samples from the predictive distribution are obtained
by taking the samples for $\ve{\beta}, \{\ve{\delta}_{v_j}\}$ and $\ve{\delta}_{w}$ from the
posterior and using it to generate samples for $\ve{\eta}_{\locsubset^*}$ with the
deterministic expression provided above.


\section{Simulation studies}
\label{sec:app-simulation-studies}

\subsection{One dimensional}%
\label{sub:one_dimensional}

In the one-dimensional experiments, we used the same scenarios and models explained in
Section \ref{sub:two_dimensional}. We observe results consistent with the two-dimensional
case, with some findings that are more pronounced in 2D due to the larger spatial coverage and increased
fluctuations of the simulated latent process. The key findings are summarised in the
following sections.

\subsubsection{Regular grid}
\label{ssub:regular_grid}

For this experiment, we define ${W(\locs)}$ using 100 basis functions of degree 2 and a
GMRF of order 1 with a scale parameter of $\kappa = 1$. The primary differences observed
between the naive model and the support model are described based on the behaviour of the
predicted mean, the uncertainty quantification, and their tendency to either overpredict
or underpredict the true underlying process.

\begin{itemize}
  \item Predicted mean: In the naive model (M1), the predicted mean exhibits abrupt
    changes and tends to overfit the data because it assumes that observations are
    associated with the centroids of the sampling units (\fig \ref{fig:sim1d_regular}-A1). In
    contrast, the support model (M2) yields a more gradual change in the predicted mean,
    as it aims to find a function whose integral over the sampling units closely matches
    the observed data.
  \item Uncertainty quantification: The naive model (M1) tends to underestimate
    uncertainty, particularly near the sampling unit centroids (\fig
    \ref{fig:sim1d_regular}-A1). In contrast, the support model (M2) provides more
    accurate uncertainty estimates, effectively encompassing the true underlying process
    (\fig \ref{fig:sim1d_regular}-A2). These differences stem from the approach of the support model for finding functions whose integrals closely match the data.
  \item Over and under prediction: The naive model (M1) exhibits more frequent values
    where the probability of overprediction is close to 1 and close to 0 (indicating
    underprediction), as seen in \fig \ref{fig:sim1d_regular}-B1. In contrast, the support
    model (M2) features a higher frequency of predictions around the value of 0.5 (\fig
    \ref{fig:sim1d_regular}-B2). This suggests that the support model provides more
    accurate predictions of the underlying true process.
\end{itemize}

\begin{figure}[H]
\begin{center}
  \includegraphics[width=0.9\linewidth]{./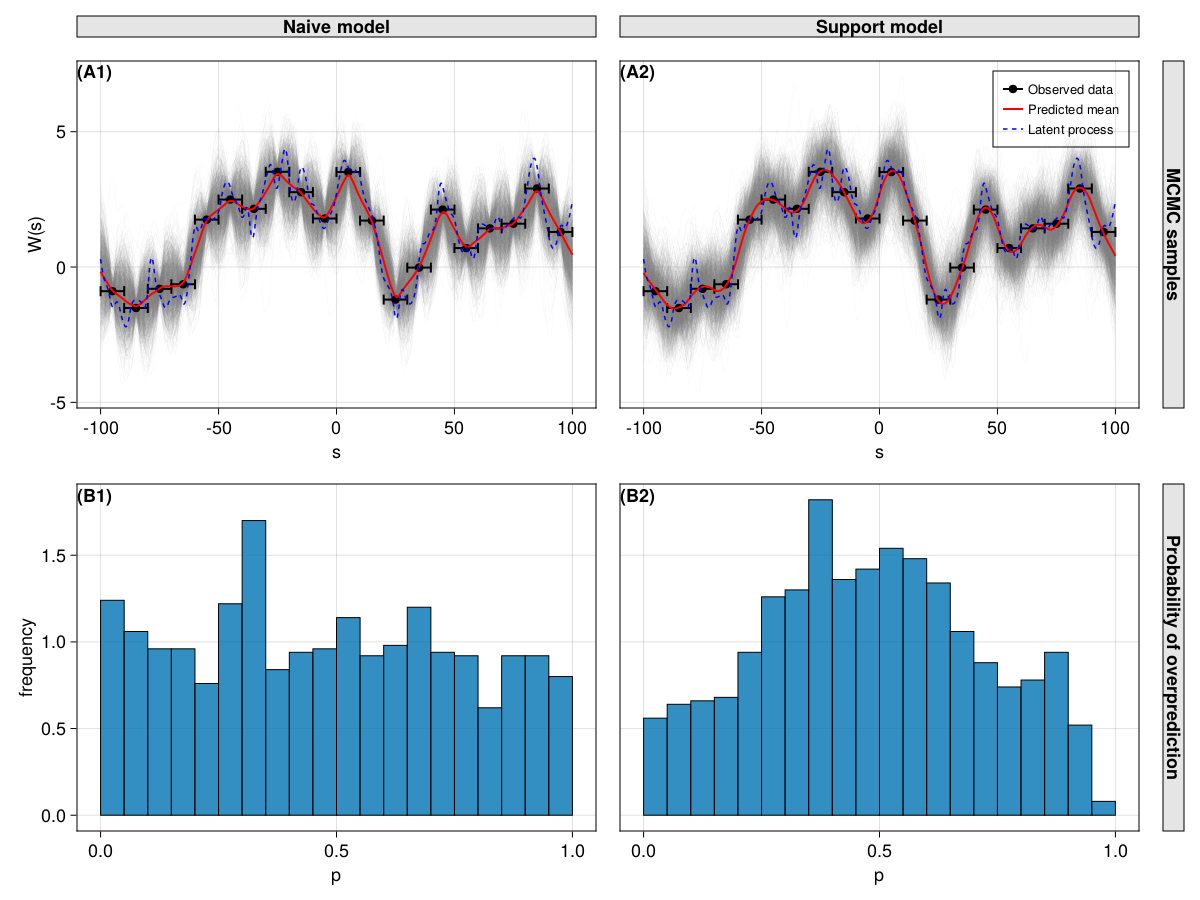}
\end{center}
  \caption{Comparison between the \textbf{naive model} (M1) and the \textbf{support model}
  (M2) when data is observed in a \textbf{regular grid}. Panels A1 and A2 show the
  predictive MCMC samples for the underlying latent process $\{W(s)\}$, while panels B1
  and B2 show histograms of the posterior probability of overpredicting the underlying
  process ($p = \pr{W(s) > w(s) \mid \ve{y}}$).}
  \label{fig:sim1d_regular}
\end{figure}

\subsubsection{Irregular grid, sparse sampling units and overlapping sampling units}
\label{ssub:irregular_grid}

For the irregular grid experiment, the latent process ${W(\locs)}$ was defined with 50
basis functions, while 100 basis functions were used for the experiments with sparse and
overlapping sampling units. In all three experiments, GMRFs of order 1 with a scale
parameter of $\kappa = 1$ were employed. The differences between the models observed in
Figures \ref{fig:sim1d_irregular}, \ref{fig:sim1d_sparse}, and \ref{fig:sim1d_overlapping}
are described below.

\paragraph{Predicted mean:}

In the experiment with irregular grids, the naive model (M1) exhibits overfitting similar
to the previous experiment with regular grids. It is significantly biased by certain
observations due to the lack of accounting for measurement error heteroscedasticity (e.g.
for $\locs \approx 20$ in \fig \ref{fig:sim1d_irregular}-A1). The heteroscedastic model
(M2) performs better by considering observation importance based on support size. In the
case of sparse sampling units, both the naive model (M1) and the heteroscedastic model
(M2) exhibit abrupt changes in the predicted mean, especially evident in sparse locations
like $\locs \in [-50,0]$ in \fig \ref{fig:sim1d_sparse}, panels A1-A2. In an attempt to
fit the data, these models react strongly to the absence of surrounding information. For
overlapping sampling units, the predicted mean of the naive model (M1) and the
heteroscedastic model (M2) does not overfit when units are close, but in mixed sparse and
overlapping regions, they exhibit abrupt changes and overfitting (e.g. for around $\locs
\approx 50$ in \fig \ref{fig:sim1d_overlapping}, panels A1-A2). Across all three
experiments, the support and heteroscedastic model (M3) outperform the naive model (M1)
and the heterocedastic model (M2). It yields smoother variation of the predicted mean,
avoids overfitting, remains unaffected by high-error observations, and accurately captures
primary trends of the underlying true process.

\paragraph{Uncertainty quantification:}

In the case of irregular grids, the naive model (M1) tends to overestimate uncertainty
because it attempts to increase the variability to capture complex data generation mechanisms.
In contrast, the heteroscedastic model (M2) underestimates uncertainty near the sampling
unit centroids for sample units of larger size (e.g., $s \approx 35$ in \fig
\ref{fig:sim1d_irregular}-A2). For sparse sampling units, both the naive model (M1) and
the heteroscedastic model (M2) tend to underestimate uncertainty for locations near the
centroids and locations with sparse data (e.g. $s \approx -50$ in \fig
\ref{fig:sim1d_sparse}, panels A1-A2). In the case of overlapping sampling units, both the
naive model (M1) and the heteroscedastic model (M2) tend to underestimate uncertainty for
locations near the centroids. For all three experiments, the support and heteroscedastic
model (M3) consistently quantifies uncertainty effectively without underestimating it for
locations close to the centroids, as observed in M1 and M2, or overestimating it. M3
provides a more accurate quantification of uncertainty, effectively enveloping the
underlying true process.

\paragraph{Over and under prediction:}

In all the conducted experiments, we observed a higher proportion of posterior
probabilities close to 1 or 0 for the naive model (M1) and the heteroscedastic model (M2),
as shown in panels B1-B3 of Figures \ref{fig:sim1d_irregular}, \ref{fig:sim1d_sparse}, and
\ref{fig:sim1d_overlapping}. In contrast, the support and heteroscedastic model (M3)
consistently exhibits lower levels of overprediction and underprediction when compared to
models M1 and M2, with more frequent probabilities centred around 0.5. This suggests that
M3 provides more accurate predictions of the underlying true process.

\begin{figure}[H]
\begin{center}
  \includegraphics[width=\linewidth]{./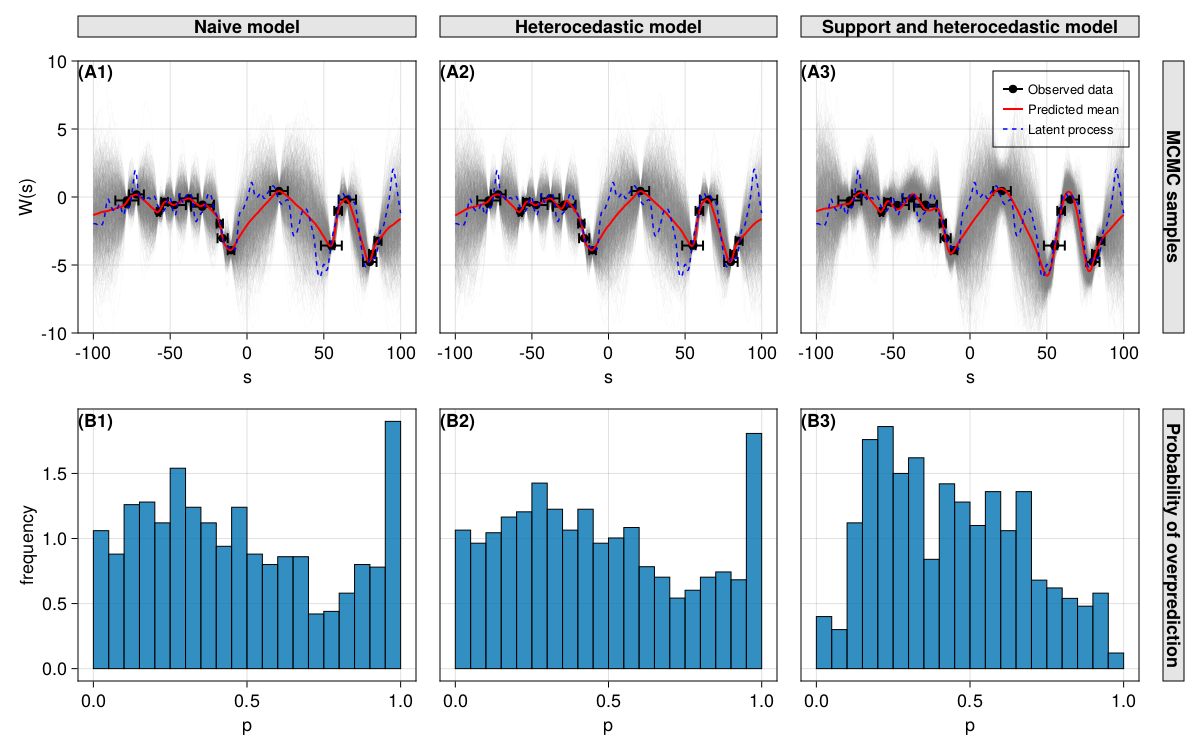}
\end{center}
  \caption{Comparison between the \textbf{naive model} (M1), the \textbf{heterocedastic
  model} (M2), and the \textbf{support and heteroscedastic model} (M3) when data is
  observed in \textbf{overlapping regions}. Panels A1-A3 show the predictive MCMC samples
  for the underlying latent process $\{W(s)\}$, while panels B1-B3 show histograms of the
  posterior probability of overpredicting the underlying process ($p = \pr{W(s) > w(s)
  \mid \ve{y}}$).}
  \label{fig:sim1d_overlapping}
\end{figure}

\begin{figure}[H]
\begin{center}
  \includegraphics[width=\linewidth]{./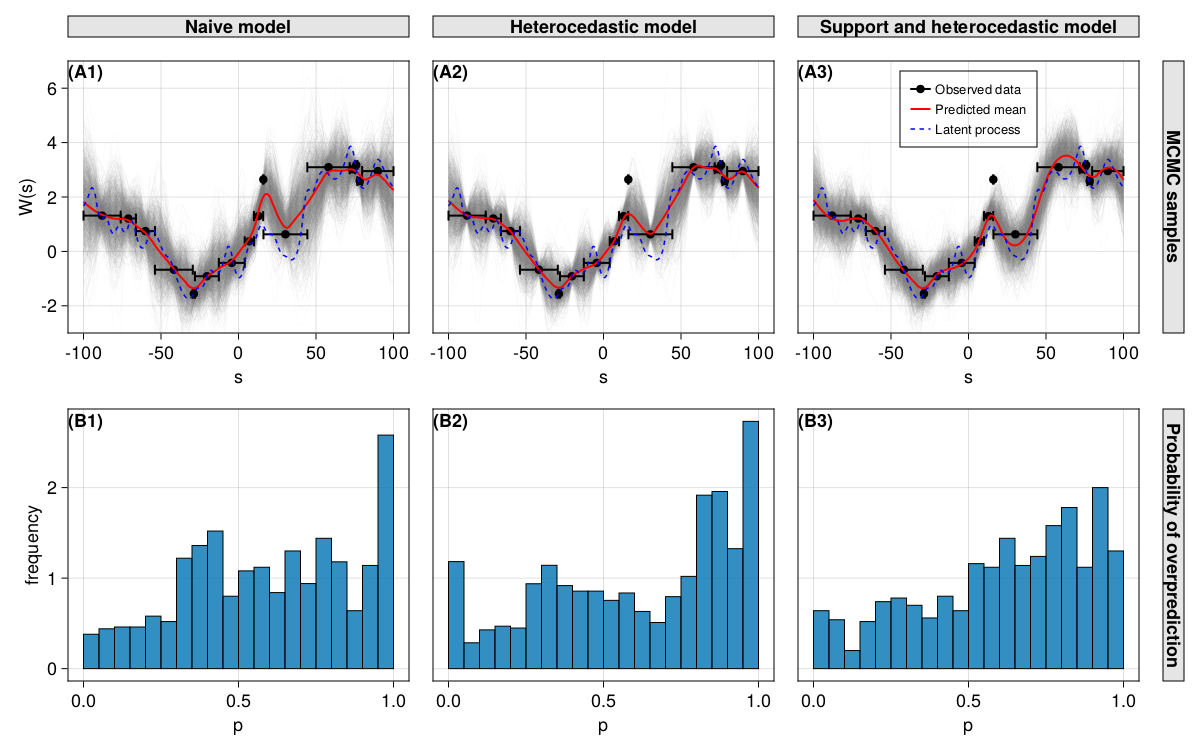}
\end{center}
  \caption{Comparison between the \textbf{naive model} (M1), the \textbf{heterocedastic
  model} (M2), and the \textbf{support and heteroscedastic model} (M3) when data is
  observed in a \textbf{irregular grid}. Panels A1-A3 show the predictive MCMC samples for
  the underlying latent process $\{W(s)\}$, while panels B1-B3 show histograms of the
  posterior probability of overpredicting the underlying process ($p = \pr{W(s) > w(s)
  \mid \ve{y}}$).}
  \label{fig:sim1d_irregular}
\end{figure}

\begin{figure}[H]
\begin{center}
  \includegraphics[width=\linewidth]{./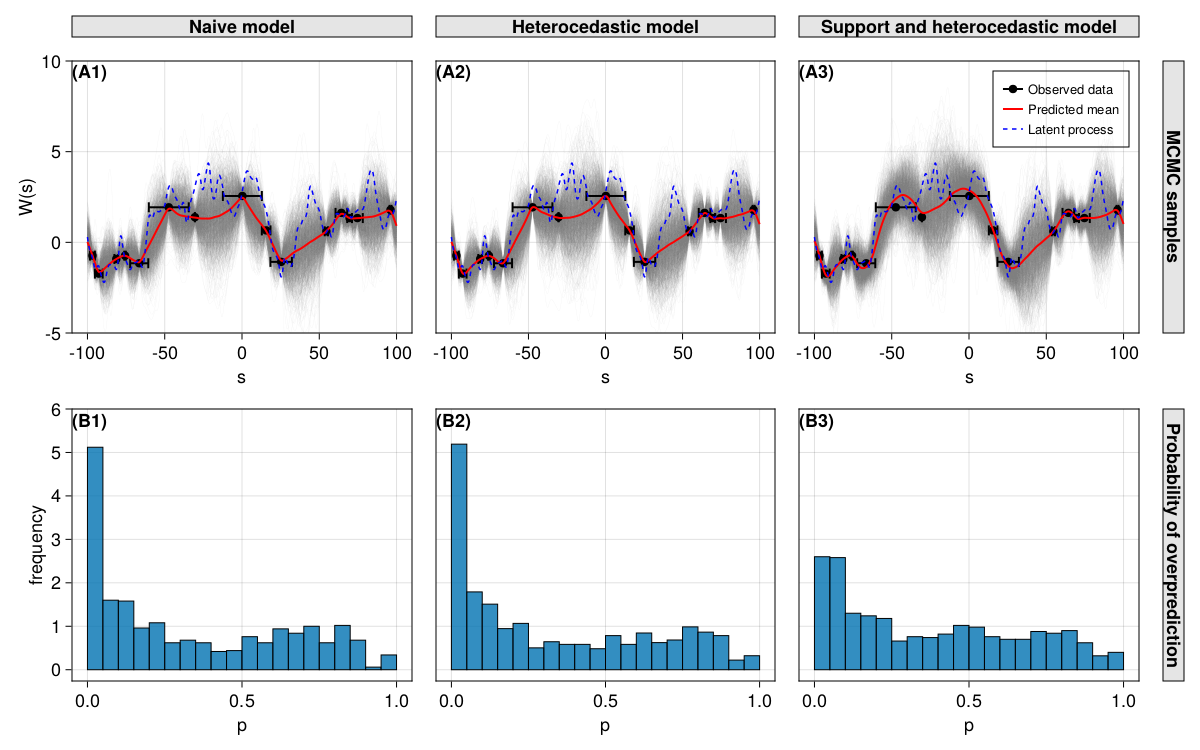}
\end{center}
  \caption{Comparison between the \textbf{naive model} (M1), the \textbf{heterocedastic
  model} (M2), and the \textbf{support and heteroscedastic model} (M3) when data is
  observed in \textbf{sparse regions}. Panels A1-A3 show the predictive MCMC samples for
  the underlying latent process $\{W(s)\}$, while panels B1-B3 show histograms of the
  posterior probability of overpredicting the underlying process ($p = \pr{W(s) > w(s)
  \mid \ve{y}}$).}
  \label{fig:sim1d_sparse}
\end{figure}

\subsection{Two dimensional}%
\label{sub:app_two_dimensional}

\begin{figure}[H]
\begin{center}
  \includegraphics[width=\linewidth]{./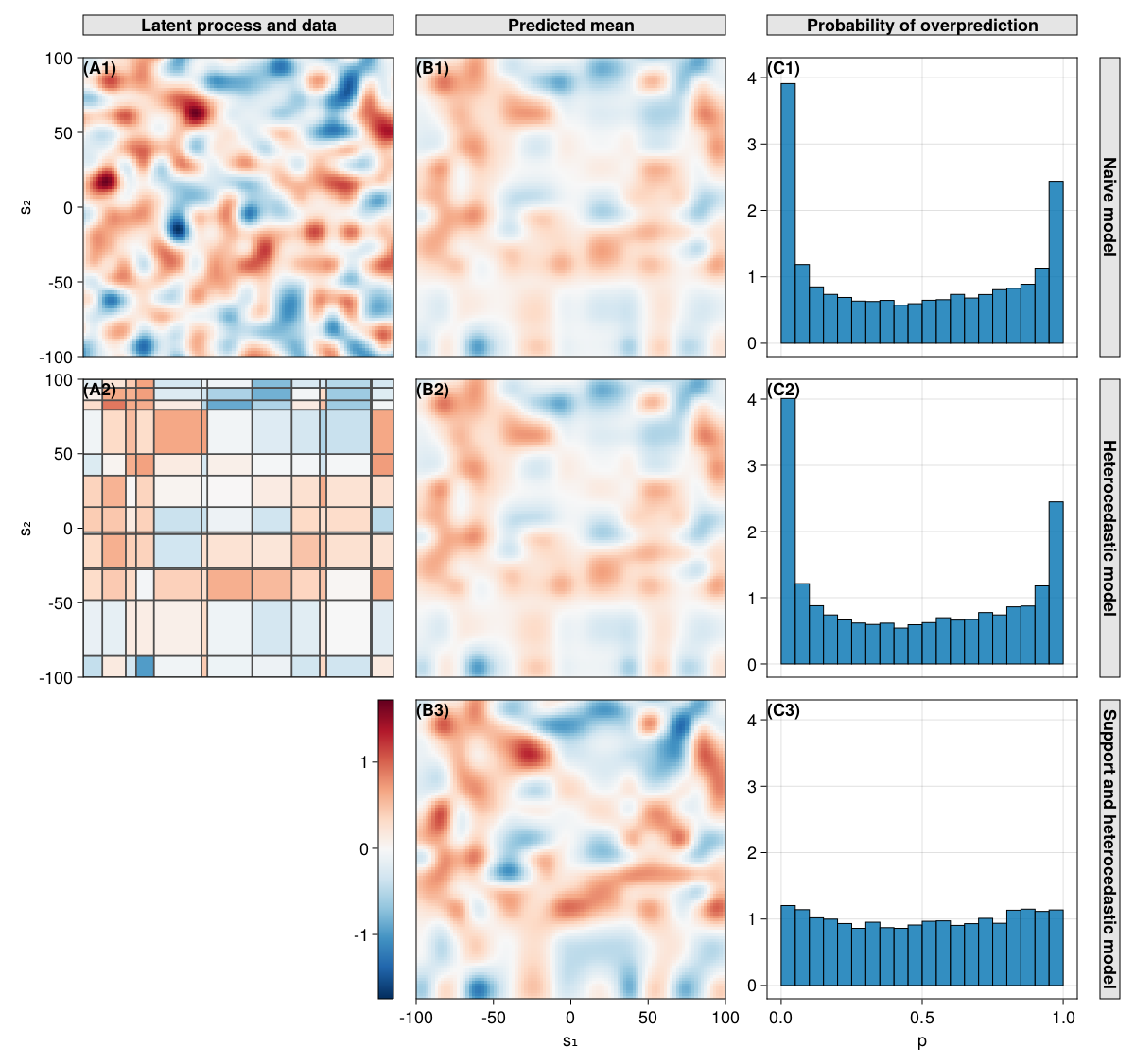}
\end{center}
  \caption{Comparison between the \textbf{naive model} (M1), the \textbf{heterocedastic
  model} (M2), and the \textbf{support and heteroscedastic model} (M3) when data is
  observed in a \textbf{irregular grid}. Panel A1 displays the continuous realization of
  the process of interest, while panel A2 shows the observed data. Panels B1-B3 present
  the predicted mean of the models and panels C1-C3 feature histograms of the posterior
  probability of overpredicting the underlying process ($p = \pr{W(s) > w(s) \mid
  \ve{y}}$).}
  \label{fig:sim2d_irregular}
\end{figure}

\begin{figure}[H]
\begin{center}
  \includegraphics[width=\linewidth]{./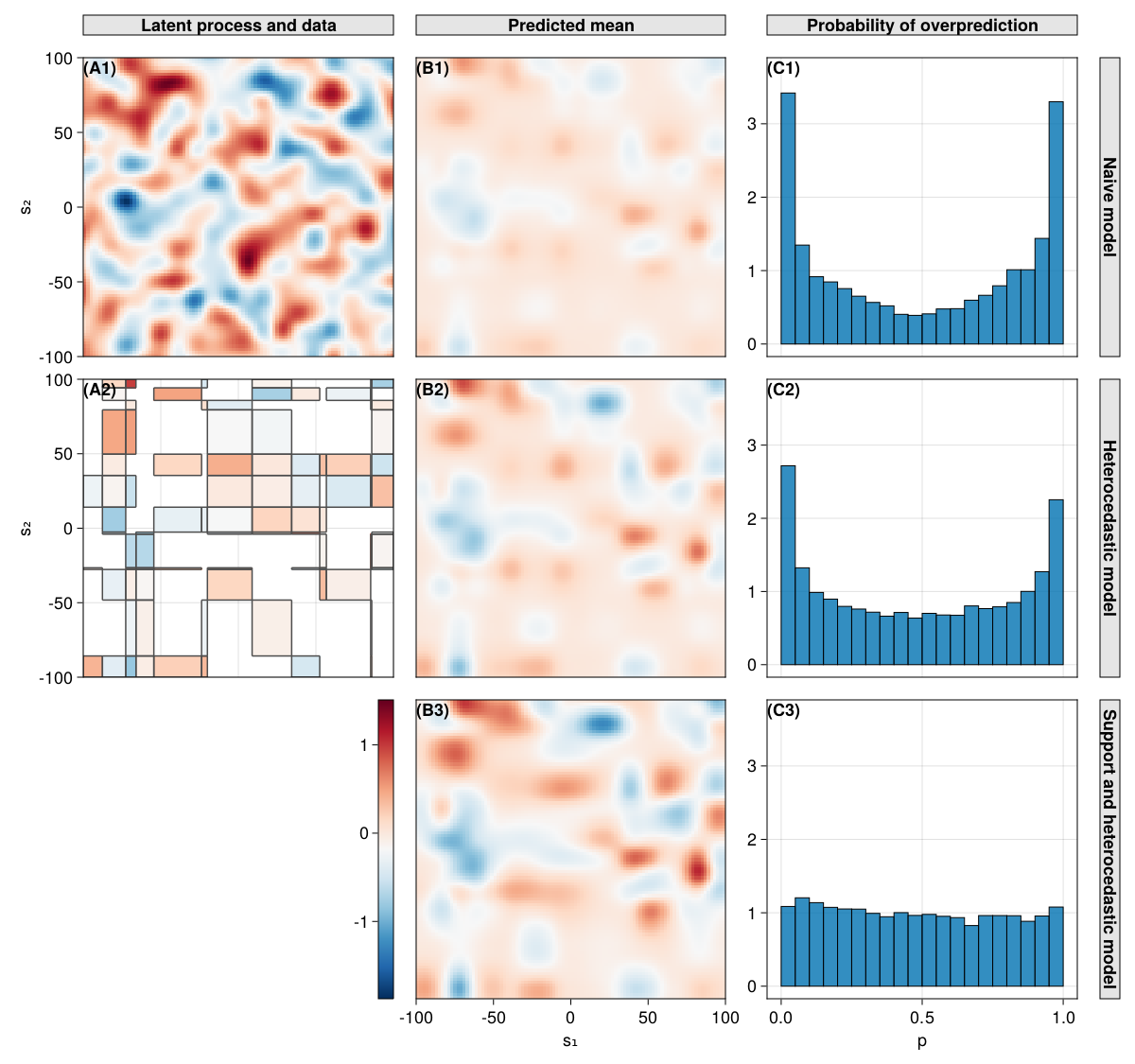}
\end{center}
  \caption{Comparison between the \textbf{naive model} (M1), the \textbf{heterocedastic
  model} (M2), and the \textbf{support and heteroscedastic model} (M3) when data is
  observed in a \textbf{sparse regions}. Panel A1 displays the continuous realization of
  the process of interest, while panel A2 shows the observed data. Panels B1-B3 present
  the predicted mean of the models and panels C1-C3 feature histograms of the posterior
  probability of overpredicting the underlying process ($p = \pr{W(s) > w(s) \mid
  \ve{y}}$).}
  \label{fig:sim2d_sparse}
\end{figure}

\begin{figure}[H]
\begin{center}
  \includegraphics[width=\linewidth]{./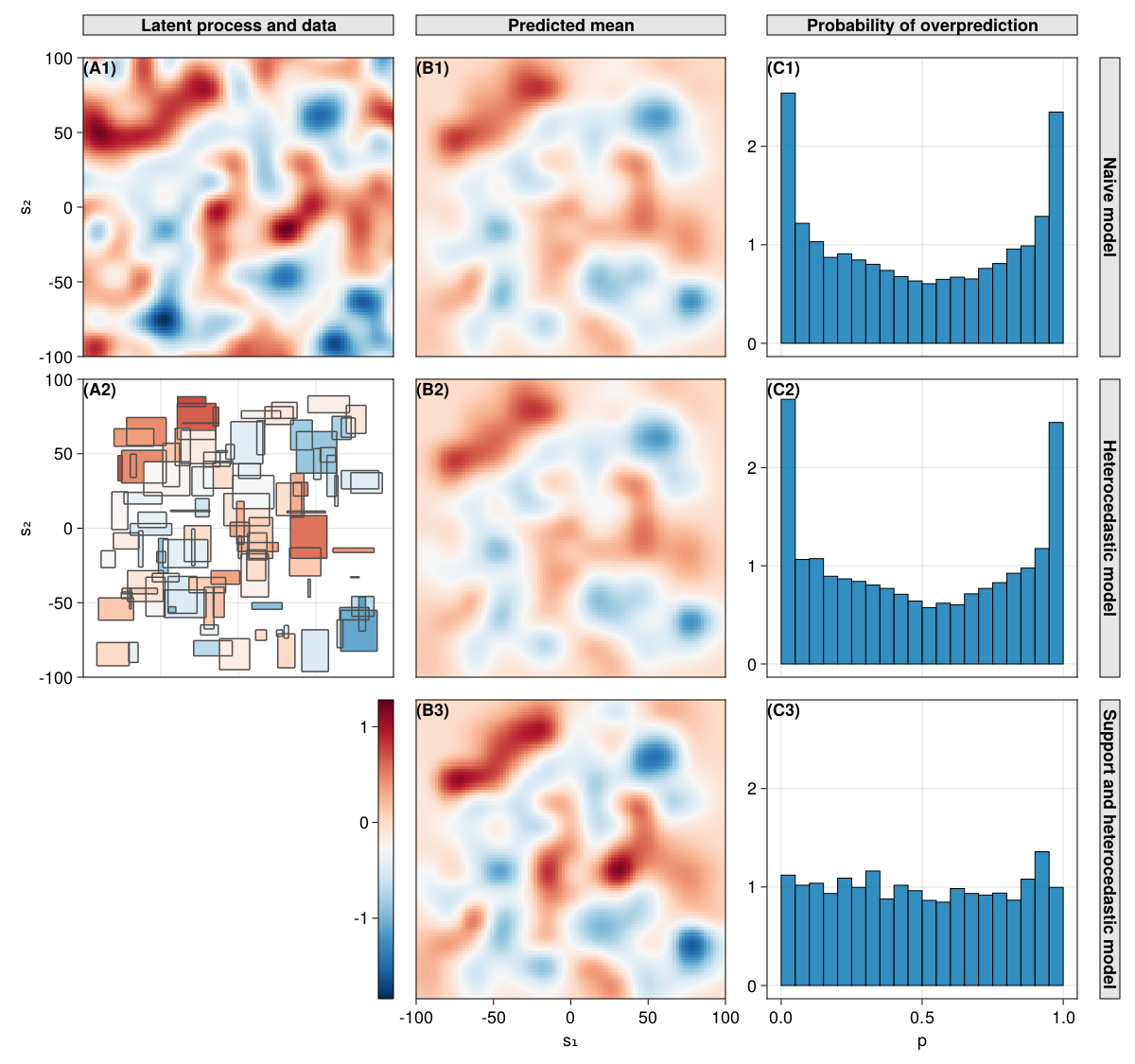}
\end{center}
  \caption{Comparison between the \textbf{naive model} (M1), the \textbf{heterocedastic
  model} (M2), and the \textbf{support and heteroscedastic model} (M3) when data is
  observed in a \textbf{overlapping regions}. Panel A1 displays the continuous realization of
  the process of interest, while panel A2 shows the observed data. Panels B1-B3 present
  the predicted mean of the models and panels C1-C3 feature histograms of the posterior
  probability of overpredicting the underlying process ($p = \pr{W(s) > w(s) \mid \ve{y}}$).}
  \label{fig:sim2d_overlapping}
\end{figure}


\section{MCMC chains}
\label{sec:app-mcmc-chains}


\begin{figure}[H]
\begin{center}
  \includegraphics[width=\linewidth]{./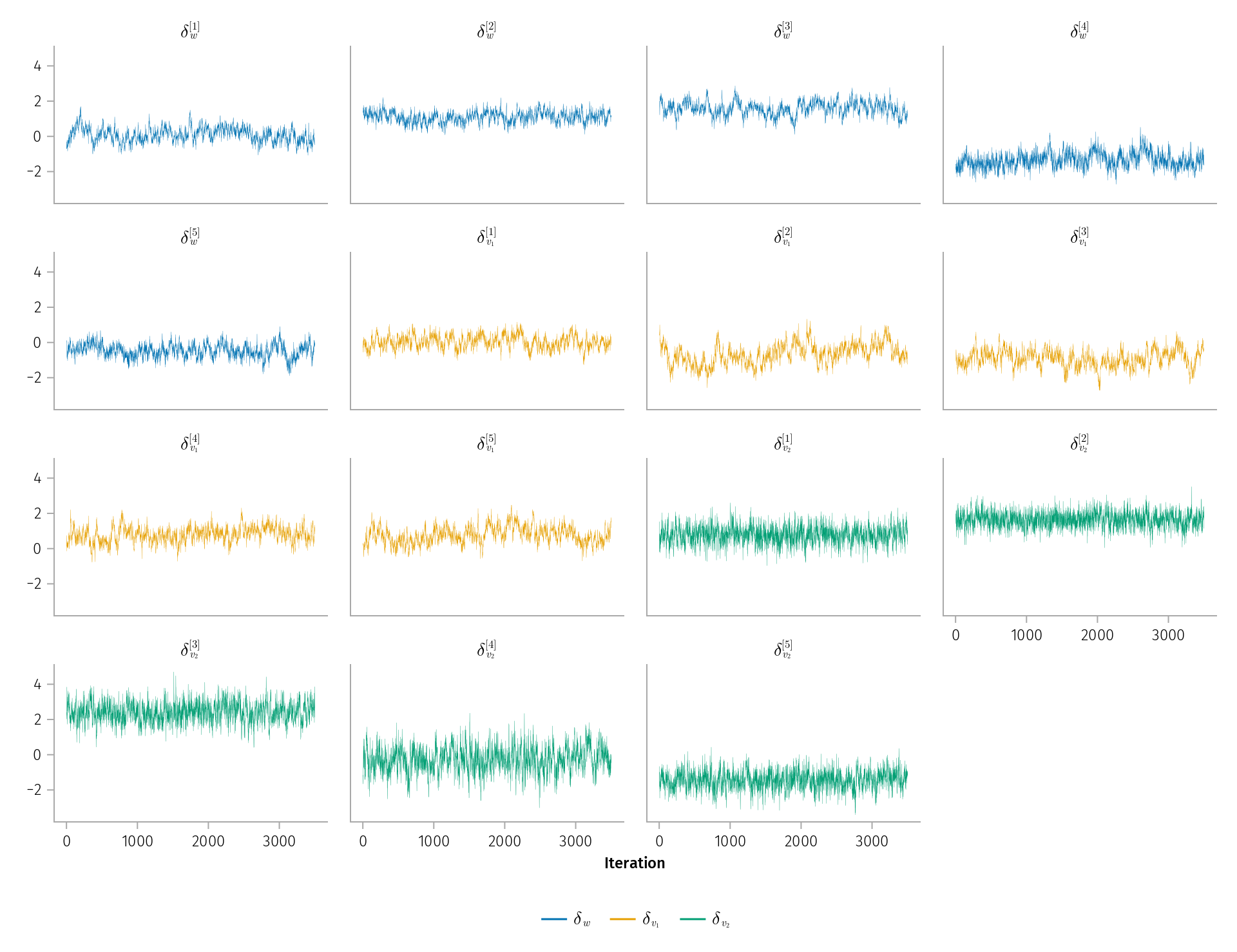}
\end{center}
  \caption{MCMC of the latent fields of land suitability modelling in Rhondda Cynon Taf
  corresponding to 5 randoms cells for each stochastic process.}
  \label{fig:app_chains_latents}
\end{figure}

\begin{figure}[H]
\begin{center}
  \includegraphics[width=\linewidth]{./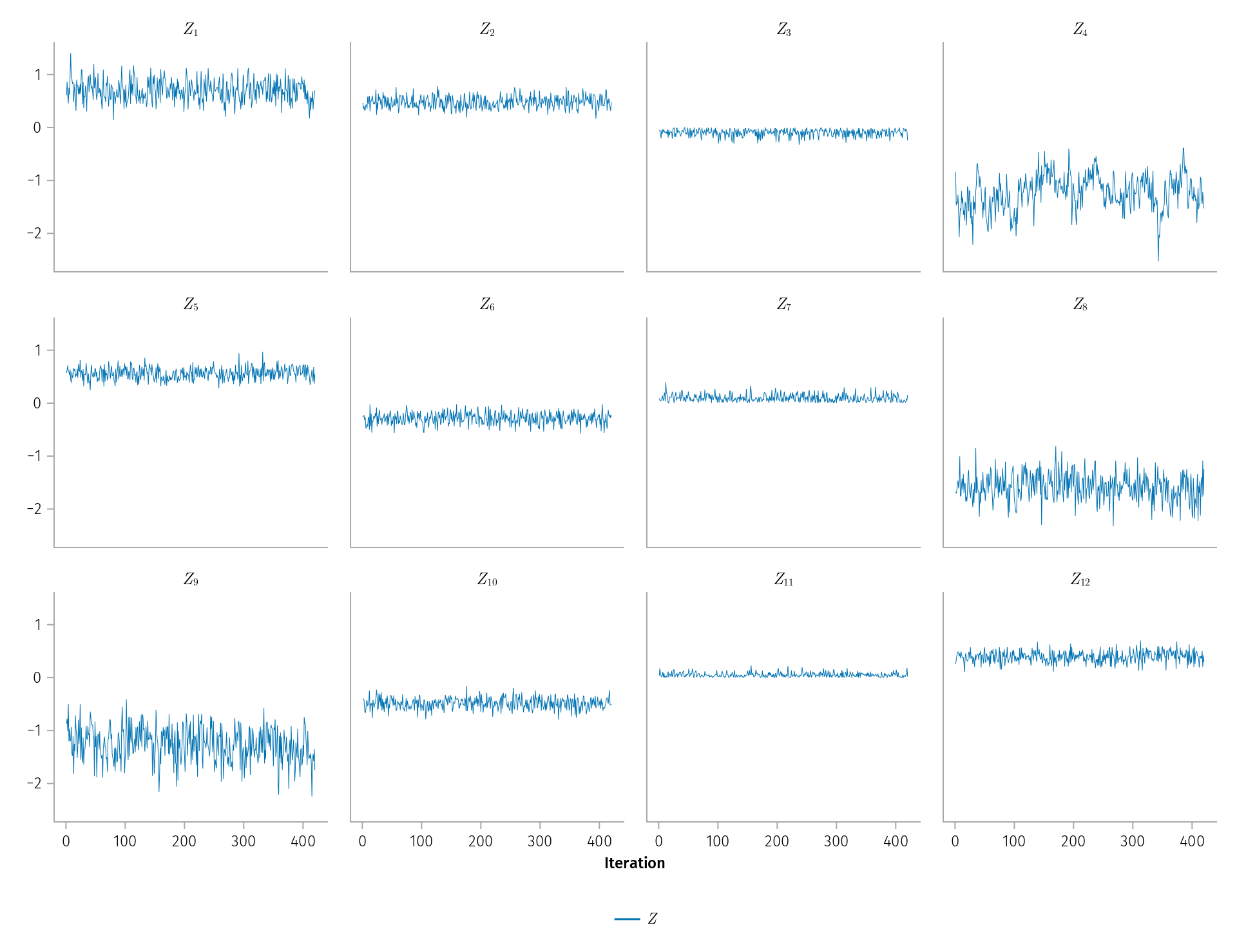}
\end{center}
  \caption{MCMC of the auxiliary process $Z$ of land suitability modelling in Rhondda Cynon Taf
  corresponding to 12 randoms sampling units.}
  \label{fig:app_chains_z}
\end{figure}


\section{Land suitability modelling in Rhondda Cynon Taf}%
\label{sec:app-case-studies}


\begin{figure}[H]
\begin{center}
  \includegraphics[width=\linewidth]{./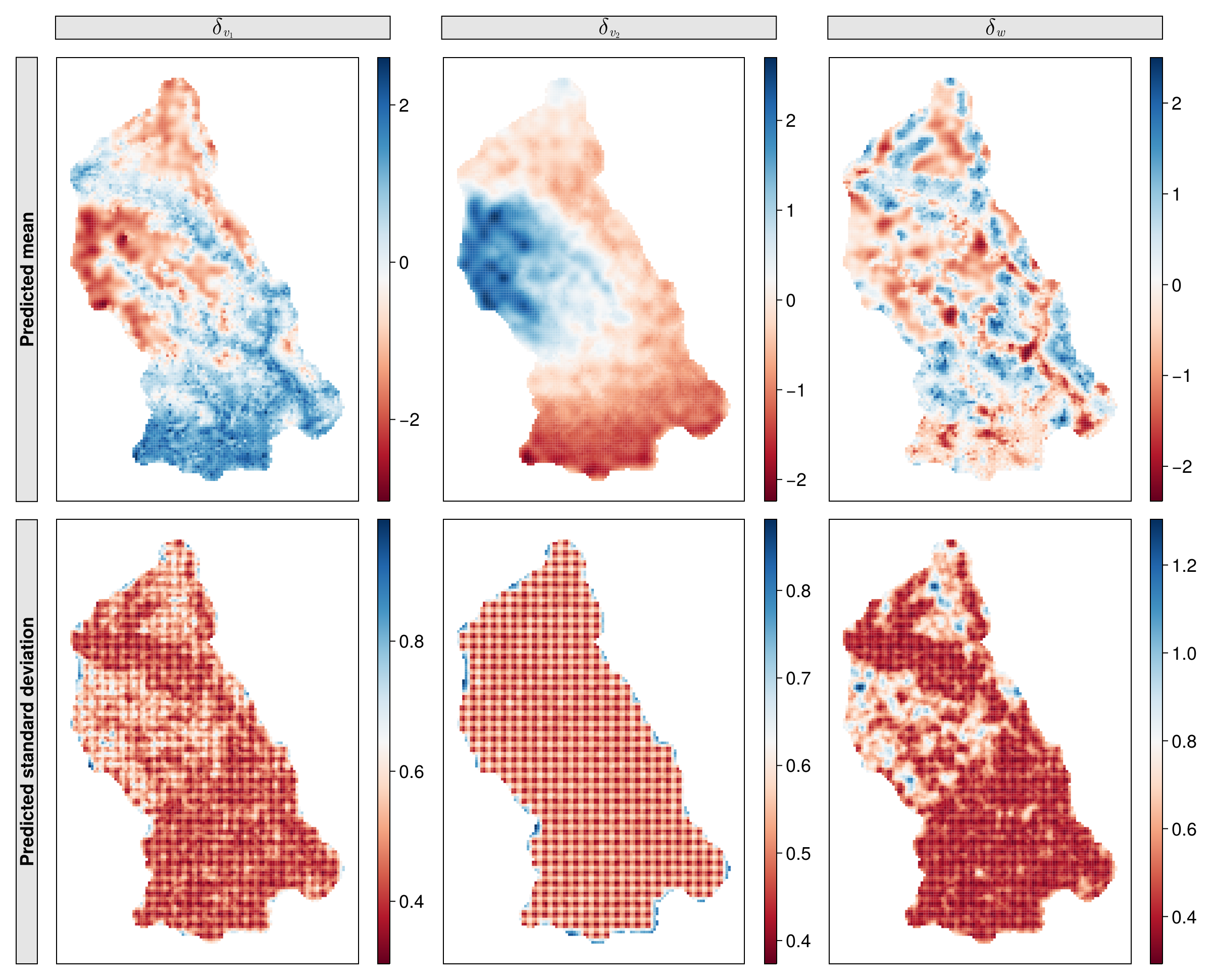}
\end{center}
  \caption{Predicted mean and standard deviation of the latent processes for growing
  degree days ($\delta_{v_1}$), soil moisture surplus ($\delta_{v_1}$), and the residual
  spatial variation ($\delta_{w}$).}
  \label{fig:app_inf_latent_process}
\end{figure}

\end{appendices}

\end{document}